\documentclass[letterpaper, 10 pt, journal, twoside]{IEEEtran}

\usepackage{amsmath}
\usepackage{amssymb}
\usepackage{amsthm}
\usepackage{graphicx}
\usepackage{cite}
\usepackage{algorithm}
\usepackage{algorithmic}

\newtheorem{theorem}{Theorem}
\newtheorem{lemma}[theorem]{Lemma}
\newtheorem{proposition}[theorem]{Proposition}
\newtheorem{definition}[theorem]{Definition}
\newtheorem{remark}[theorem]{Remark}
\newtheorem{corollary}[theorem]{Corollary}
\newtheorem{assumption}{Assumption}

\begin{document}

\title{The Space-Time Transform: Memory-Augmented Control Barrier
  Functions}

\author{Avinash Malik%
\thanks{Department of Electrical, Computer, and Software Engineering,
  University of Auckland, New Zealand. email: avinash.malik@auckland.ac.nz}
}

\maketitle

\begin{abstract}
  Control Barrier Functions (CBFs), their High-Order variants (HOCBFs)
  and Exponential CBFs (ECBFs) are standard geometric tools for
  enforcing nonlinear safety constraints. CBFs, and their variants,
  offer an elegant geometric framework for nonlinear safety, yet
  mathematically, they reduce to continuous-time convolutions restricted
  by zero-memory kernels. In the presence of high-frequency measurement
  noise, these memoryless operators act as improper filters, leading to
  significant control chattering and the potential loss of active
  control authority due to Quadratic Program (QP) infeasibility. To
  address this structural limitation, this paper introduces a space-time
  transform that embeds dynamic temporal filtering directly into the
  safety constraint synthesis. By designing a proper spatio-temporal
  kernel, this approach inherently attenuates high-frequency noise while
  preserving affine control authority. Crucially, we prove the robust
  forward invariance of the designed STT-CBF. Monte Carlo simulations of
  a third-order system demonstrate that the proposed framework achieves
  a $100\%$ safety rate while reducing control total variation by over
  $99\%$ compared to conventional parameterized barrier methods,
  mitigating hardware hazards and enabling reliable deployment on
  physical robotic platforms.
\end{abstract}

\section{Introduction}
\label{sec:introduction}

The enforcement of forward invariance in safety-critical systems has
been dominated by the geometry of Control Barrier Functions
(CBFs)~\cite{ames2019control}. For systems of higher relative degree,
the community standard is the High-Order CBF
(HOCBF)~\cite{xiao2019control} or the Exponential CBF
(ECBF)~\cite{nguyen2016exponential}, which rely on successive Lie
differentiation of a spatial boundary to uncover the control input.

Although this approach offers an elegant geometric interpretation, it
operates fundamentally as a continuous-time convolution with a
zero-memory kernel. Because standard barrier frameworks map spatial
boundaries to control inputs using instantaneous derivatives rather than
accumulated temporal history, they are constrained by their purely
instantaneous nature. Recent literature has identified this lack of
temporal memory, and the subsequent loss of real-time feasibility under
uncertainty, as a major issue plaguing CBF
deployment~\cite{huang2025comprehensive}. This temporal vulnerability
compounds known geometric limitations in standard barrier formulations,
such as the introduction of undesirable asymptotically stable equilibria
formally identified by Reis et al.~\cite{reis2020control}.

In this paper, we prove that this lack of memory is not merely a tuning
issue, but a structural limitation. We prove that any geometric safety
filter mapped by an improper transfer function will act as a high-pass
filter, amplifying high-frequency transient geometry into unbounded
control demands. This structurally ignores the physical plant's
bandwidth limitations, inevitably violating physical actuator limits and
violating the real-time Quadratic Program (QP).

To address this limitation, we introduce the Space-Time Transform (STT),
replacing instantaneous geometric sequences with a proper, predictive
temporal convolution. By enforcing the safety constraint through a
proper Linear Time-Invariant (LTI) dynamic extension, the transform
rigorously bounds the demanded control effort, guarantees continuous QP
feasibility under hardware saturation, and subsumes classical memoryless
CBFs into a unified spatio-temporal framework.

Our main \textbf{contributions} can be listed as follows:

\begin{itemize}
\item \textbf{Convolutional perspective of safety frameworks:} We
  establish a fundamental mathematical perspective on standard geometric
  safety filters (CBFs, HOCBFs, and ECBFs), proving that they
  structurally operate as continuous-time convolutions restricted by
  zero-memory, improper distributional kernels.
\item \textbf{Fundamental limits of instantaneous geometry:} We
  mathematically show that these memoryless spatial mappings
  structurally demand unbounded high-frequency gain. This inevitably
  leads to disturbance-induced actuator saturation and Quadratic Program
  (QP) infeasibility in higher relative degree systems.
\item \textbf{Structural failure modes of external pre-filtering:} We
  formally prove that attempting to mitigate noise via external dynamic
  pre-filtering introduces an inherent mathematical trade-off. This
  approach strictly results in either a total loss of affine control
  authority, loss of physical forward invariance, or dynamic boundary
  penetration due to temporal phase lag.
\item \textbf{The Space-Time Transform (STT) and robust forward
    invariance:} We introduce a novel spatio-temporal framework that
  replaces instantaneous geometric differentiation with a proper
  continuous-time convolution. We rigorously prove that this formulation
  bounds control effort, maintains QP feasibility under actuator limits,
  and guarantees robust physical forward invariance by transforming
  uncompensated phase lag into a strictly bounded dynamic discrepancy.
  We use Monte Carlo simulations of a third-order system to demonstrate
  that proposed framework achieves safety rates equivalent to
  state-of-the-art parameterized barrier methods, while drastically
  reducing control variance.
\end{itemize}

The remainder of this paper is structured as follows.
Section~\ref{sec:prel-geom-safety} reviews standard geometric safety
frameworks, including standard, High-Order, and Exponential CBFs.
Section~\ref{sec:conv-repr-safety} examines these classical formulations
to explore their mathematical structure as memoryless continuous-time
convolutions. Section~\ref{sec:struct-limit-inst} discusses the
challenges of these improper kernels under hardware limits and the
limitations of mitigating them via external pre-filtering. To address
these issues, Section~\ref{sec:space-time-transform} introduces the
Space-Time Transform (STT). Section~\ref{sec:synth-spat-decay} connects
physical temporal constraints to spatial geometries via temporal horizon
inversion. Section~\ref{sec:engin-design-proc} outlines a six-step
engineering design procedure for real-time execution based on this
theory. Finally, Section \ref{sec:exper-valid} provides experimental
validation against standard baselines, followed by a discussion of
related work in Section \ref{sec:related_work} and concluding remarks in
Section~\ref{sec:concl-future-work}.

\section{Preliminaries: Geometric Safety Constraints}
\label{sec:prel-geom-safety}

Consider a nonlinear affine control system of the form:
\begin{equation}
    \dot{x} = f(x) + g(x)u,
\end{equation}
where $x \in \mathcal{X} \subset \mathbb{R}^{n_x}$ is the system state and
$u \in \mathcal{U} \subset \mathbb{R}^{n_u}$ is the control input. The functions
$f$ and $g$ are assumed to be locally Lipschitz continuous. The input
space is bounded, such that $\mathcal{U} = \{u \in \mathbb{R}^{n_u} \mid \|u\|_\infty \le u_{\max}\}$.

Safety is formulated as the forward invariance of a set $\mathcal{C}$, defined as the superlevel set of a continuously differentiable function $h: \mathbb{R}^{n_x} \to \mathbb{R}$:
\begin{equation}
    \mathcal{C} = \{x \in \mathbb{R}^{n_x} \mid h(x) \ge 0\}.
\end{equation}

\paragraph{Standard Control Barrier Functions (CBFs)}
\label{sec:stand-contr-barr}

For a system where the relative degree of $h(x)$ with respect to $u$ is
$n=1$, the standard Control Barrier Function (CBF) guarantees that
$\mathcal{C}$ is forward-invariant if there exists an extended
class-$\mathcal{K}$ function $\alpha$ such that for all $x \in \mathcal{C}$:
\begin{equation}
    \sup_{u \in \mathcal{U}} \Big[ \mathcal{L}_f h(x) + \mathcal{L}_g h(x) u \Big] \ge -\alpha(h(x)). \label{eq:standard_cbf}
\end{equation}
Along the system trajectories, this maps to the strict continuous-time inequality $\dot{h}(t) + \alpha(h(t)) \ge 0$.

\paragraph{High-Order CBFs (HOCBFs)}
\label{sec:high-order-cbfs}

For systems of relative degree $n \ge 2$, $u$ does not appear in the first
derivative. The High-Order CBF (HOCBF) framework constructs a recursive
sequence of spatial constraints. Let $\psi_0(x) = h(x)$. For
$i \in \{1, \dots, n-1\}$, the state-dependent geometric sequence is:
\begin{equation}
    \psi_i(x) = \mathcal{L}_f \psi_{i-1}(x) + \alpha_i(\psi_{i-1}(x)).
\end{equation}
The synthesized active constraint enforced by the QP explicitly introduces $u$ at the $n$-th derivative. Expanding the time derivative of $\psi_{n-1}(x)$ along the system trajectories yields:
\begin{equation}
    C_{\text{HOCBF}}(x, u) = \mathcal{L}_f \psi_{n-1}(x) + \mathcal{L}_g \psi_{n-1}(x)u + \alpha_n(\psi_{n-1}(x)) \ge 0.
\end{equation}

\paragraph{Exponential CBFs (ECBFs)}
\label{sec:expon-cbfs-ecbfs}

Alternatively, the Exponential CBF (ECBF) enforces safety by
constraining the $n$-th order time derivative of $h(x)$ via a static
linear combination of its lower-order time derivatives:
\begin{equation}
    C_{\text{ECBF}}(x, u) = h^{(n)}(x, u) + K_1 h^{(n-1)}(x) + \dots + K_n h(x) \ge 0, \label{eq:ecbf_def}
\end{equation}
where
$h^{(n)}(x, u) = \mathcal{L}_f^n h(x) + \mathcal{L}_g \mathcal{L}_f^{n-1} h(x) u$. The constant gains
$K_i$ are selected such that the roots of the continuous-time
characteristic polynomial
$p(\lambda) = \lambda^n + K_1 \lambda^{n-1} + \dots + K_n = 0$ have negative real parts.

\section{Convolutional Representation of Safety Kernels}
\label{sec:conv-repr-safety}

The geometric constraints defined above are universally treated as
purely spatial boundary conditions. However, we can mathematically
deconstruct these formulations to reveal their underlying structure as
continuous-time convolution operators mapping spatial trajectory
signals.

\subsection{The CBF Kernel}
Let the instantaneous spatial boundary dynamics along state trajectory
$x(t)$ be denoted by
$z(t) = \frac{\mathrm{d}}{\mathrm{d}t}h(x(t)) + \alpha(h(x(t)))$, where
$\frac{\mathrm{d}}{\mathrm{d}t}h(x(t)) = \mathcal{L}_f h(x(t)) + \mathcal{L}_g h(x(t))u(t)$
represents the total ordinary time derivative. By the fundamental
sifting property of the Dirac delta distribution $\delta(\tau)$, any continuous
scalar signal evaluated at time $t$ can be expressed as an integral over
a forward-looking horizon $\tau \in [0, \infty)$. Substituting $z(t)$ yields:
\begin{equation}
    \int_0^\infty \delta(\tau) \Big[ \frac{\mathrm{d}}{\mathrm{d}t}h(x(t+\tau)) + \alpha\big(h(x(t+\tau))\big) \Big] \mathrm{d}\tau \ge 0. \label{eq:cbf_convolution}
\end{equation}
Equation~\eqref{eq:cbf_convolution} explicitly demonstrates that standard CBFs map spatial boundary geometry to safety constraints via a memoryless temporal kernel, $g_{\text{CBF}}(\tau) = \delta(\tau)$.

\subsection{The Unified High-Order Kernel}
\label{sec:unified-high-order}

To analyze high-degree HOCBF and ECBF constraints, we restrict the HOCBF
class-$\mathcal{K}$ functions to be linear
($\alpha_i(s) = \lambda_i s$). For a system of relative degree $n$, the unforced
spatial dynamics are governed by Lie derivatives along the drift vector
field, $\mathcal{L}_f^k h(x(t))$, for $k \in \{0, \dots, n\}$. The control input
$u(t)$ enters at order $n$ through the total time derivative:
\begin{equation}
    \frac{\mathrm{d}^n h(x(t))}{\mathrm{d}t^n} = \mathcal{L}_f^n h(x(t)) + \mathcal{L}_g \mathcal{L}_f^{n-1} h(x(t)) u(t).
\end{equation}
Under linear decay, both HOCBF and ECBF evaluate to the exact same $n$-th order affine constraint structure $C(t)$:
\begin{equation}
    C(t) = \mathcal{L}_f^n h(x(t)) + \sum_{k=0}^{n-1} C_k \mathcal{L}_f^k h(x(t)) + b(x(t))u(t) \ge 0, \label{eq:unified_polynomial}
\end{equation}
where $b(x) = \mathcal{L}_g \mathcal{L}_f^{n-1}h(x)$. For lower-order derivatives $k < n$, spatial Lie differentiation along system trajectories strictly coincides with ordinary temporal differentiation: $\mathcal{L}_f^k h(x(t)) = \frac{\mathrm{d}^k h(x(t))}{\mathrm{d}t^k}$.

\begin{theorem}[Distributional Kernel Representation]
  \label{thm:cbfrep}
  Every linear affine safety formulation (HOCBF or ECBF) of relative degree $n \ge 1$ admits a unique representation as a continuous-time temporal convolution mapping the spatial trajectory $h(t)$ through an improper distributional kernel $g(\tau)$.
\end{theorem}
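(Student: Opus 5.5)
The plan is to rewrite the unified affine constraint \eqref{eq:unified_polynomial} as a linear differential operator acting on the scalar signal $h(t) := h(x(t))$. Then I would use the sifting property, exactly as in \eqref{eq:cbf_convolution}, to express that operator as convolution with a sum of Dirac derivatives.

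First, using $\mathcal{L}_f^k h(x(t)) = \frac{\mathrm{d}^k h}{\mathrm{d}t^k}$ for $k<n$, together with the expression for $\frac{\mathrm{d}^n h}{\mathrm{d}t^n}$, I would show that \eqref{eq:unified_polynomial} is exactly
\begin{equation*}
C(t) = h^{(n)}(t) + \sum_{k=0}^{n-1} C_k h^{(k)}(t) = \big(\mathcal{D}h\big)(t),
\end{equation*}
with $\mathcal{D} = p(\tfrac{\mathrm{d}}{\mathrm{d}t})$ and $p(s) = s^n + \sum_{k} C_k s^k$. For ECBF this gives $C_k = K_{n-k}$ directly. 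For HOCBF with $\alpha_i(s)=\lambda_i s$, I would unroll the recursion by induction on $i$ to get $\psi_i = \prod_{j\le i}(\tfrac{\mathrm{d}}{\mathrm{d}t}+\lambda_j)h$. Then $p(s)=\prod_{j=1}^n(s+\lambda_j)$, and its coefficients $C_k$ are the elementary symmetric polynomials of the $\lambda_j$.

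Second, I would define the kernel
\begin{equation*}
g(\tau) = \delta^{(n)}(\tau) + \sum_{k=0}^{n-1} C_k\,\delta^{(k)}(\tau).
\end{equation*}
The distributional identity $\int \delta^{(k)}(\tau)\varphi(\tau)\,\mathrm{d}\tau = (-1)^k\varphi^{(k)}(0)$ then gives $C(t) = \int_0^\infty g(\tau)\, h(t+\tau)\,\mathrm{d}\tau$ (or $(g\ast h)(t)$ under the convolution convention), after absorbing the sign $(-1)^k$ by reflecting $\tau\mapsto-\tau$. Equivalently, I would take the Laplace transform $G(s)=p(s)$. Because $G$ is a polynomial of degree $n\ge1$, it is an improper transfer function: $|G(j\omega)|\to\infty$ as $\omega\to\infty$. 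This is the precise sense in which the kernel is ``improper.''

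Third, I would prove uniqueness through the Laplace/Fourier transform. Suppose two kernels $g_1,g_2$ supported at the origin produce the same output for all admissible $h$, for instance all exponentials $e^{st}$. Then $G_1(s)=G_2(s)$ as polynomials, so the coefficients of the $\delta^{(k)}$ agree. Equivalently, a distribution supported at $\{0\}$ is a finite combination of $\delta^{(k)}$, and that combination is determined by its action on test functions.

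The main obstacle is making the convolution-with-distribution statement rigorous. $h(t)$ is only $C^n$ along trajectories, and $h^{(n)}$ depends on $u(t)$, which may be discontinuous. I would restrict to $u$ regular enough that $h\in C^n$, or else interpret the identity pointwise almost everywhere. A second subtlety is the one-sided integral over $[0,\infty)$ with derivatives of $\delta$ located at the endpoint. To handle it, I would use the convention $\int_{0^-}^\infty$ and fix the sign convention once so that the representation, and hence its uniqueness, is well defined.
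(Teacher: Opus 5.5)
Your proof is correct and uses the same mechanism as the paper's. You recognize the linear HOCBF/ECBF constraint as the constant-coefficient polynomial $p(\tfrac{\mathrm{d}}{\mathrm{d}t})$ applied to $h$, and replace each $h^{(k)}$ by $\delta^{(k)} * h$ to get the kernel $g=\delta^{(n)}+\sum_k C_k\delta^{(k)}$.

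Where you differ is the decomposition. The paper writes $C(t)=\Omega[h](t)+b(x(t))u(t)$, keeping the control term outside the convolution, because Theorem~\ref{thm:chatter} needs to isolate $u^*=-\Omega[h]/b$. That split is only consistent if $\delta^{(n)}$ is read as returning the drift derivative $\mathcal{L}_f^n h$, i.e., $g$ acts on the unforced flow from $x(t)$. On the closed-loop signal, $\delta^{(n)}*h=\mathcal{L}_f^n h+bu$, so $bu$ would be counted twice. Your single identity $C=g*h$ on the closed-loop signal is internally consistent, and you recover the affine dependence on $u$ afterwards by expanding $h^{(n)}$.

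You also supply several things the paper's proof leaves out:
\begin{itemize}
\item the explicit unrolling of the linear HOCBF recursion, which the paper asserts before the theorem but never derives;
\item a precise meaning of ``improper'' via $G(s)=p(s)$;
\item an actual argument for ``unique,'' which the paper does not give;
\item the sign/endpoint issue: the paper's forward form $\int_0^\infty g(\tau)h(t+\tau)\,\mathrm{d}\tau$ returns $(-1)^k h^{(k)}(t)$ from $\delta^{(k)}$, so one must use $\int_{0^-}^\infty g(\tau)h(t-\tau)\,\mathrm{d}\tau$ or the reflected kernel, as you note.
\end{itemize}

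Two refinements on uniqueness. First, state explicitly that it is uniqueness of the operator on a rich signal class (exponentials or test functions). Along a single trajectory, many different kernels produce the same output. Second, you do not need to restrict to kernels supported at the origin, since $(g_1-g_2)*\varphi=0$ for all test functions $\varphi$ already forces $g_1=g_2$.
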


\begin{proof}
  By the properties of distributions, the $k$-th ordinary temporal derivative of a scalar signal $h(t)$ is equivalent to a convolution with the $k$-th distributional derivative of the Dirac delta, $(\delta^{(k)} * h)(t) = \frac{\mathrm{d}^k h}{\mathrm{d}t^k}$. Because $u(t)$ enters exclusively through order $n$, the autonomous spatial terms $\mathcal{L}_f^k h(x(t))$ for $k \le n$ map directly to unforced temporal derivatives $\frac{\mathrm{d}^k h}{\mathrm{d}t^k}$.

  Applying this identity allows us to factor out the temporal trajectory $h(t)$ in \eqref{eq:unified_polynomial}, rewriting the autonomous component via a continuous-time convolution operator $\Omega$:
  \begin{equation}
    C(t) = \Omega[h](t) + b(x(t))u(t) \ge 0,
  \end{equation}
  where $\Omega[h](t) = \int_0^\infty g(\tau) h(t+\tau) \mathrm{d}\tau$. The effective temporal kernel $g(\tau)$ synthesizing the spatial Lie derivative operations of both HOCBF and ECBF is formulated purely from sequential distributional derivatives:
  \begin{equation}
    g(\tau) = \delta^{(n)}(\tau) + \sum_{k=0}^{n-1} C_k \delta^{(k)}(\tau). \label{eq:unified_kernel}
  \end{equation}
  Because $g(\tau)$ consists entirely of Dirac delta impulses evaluated at $\tau = 0$, it operates with zero temporal memory.
\end{proof}

\section{Structural Limitations of Instantaneous Geometry}
\label{sec:struct-limit-inst}

Having established that current safety methods structurally operate as zero-memory convolutions, we can evaluate their real-time feasibility by analyzing the frequency-domain properties of their effective kernels. 

\subsection{Memorylessness of Standard CBFs}
\label{sec:memoryl-stand-cbfs}

The standard CBF kernel $g_{\text{CBF}}(\tau) = \delta(\tau)$ is a memoryless
operator. Because it relies solely on instantaneous spatial gradients at
$t=0$, the constraint lacks an internal state or integral action.
Consequently, the mapping from the boundary trajectory to the control
input lacks a proper filtering mechanism, rendering the commanded
control effort highly sensitive to bounded measurement noise and
high-frequency disturbances.

\subsection{Fundamental Limits of Higher Order and Exponential CBFs}
\label{sec:fund-limits-high}

We now state a general impossibility result for any safety constraint
relying on an improper temporal mapping.

\begin{theorem}[Disturbance-Induced Actuator Saturation for Improper Safety Kernels]
  \label{thm:chatter}
  Consider a nonlinear affine system under a strict input saturation
  limit $\mathcal{U} = [-u_{\max}, u_{\max}]$ whose closed-loop trajectory set
  includes trajectories subjected to bounded measurement disturbances
  $n(t) \in L_\infty$ with non-zero spectral energy across arbitrarily high
  frequencies. Let the safety constraint be defined by the affine
  inequality $C(t) = \Omega[h](t) + b(x(t))u(t) \ge 0$, where
  $\Omega[h](t)$ is a continuous-time convolutional functional acting on the
  sensed state. If the Laplace-domain transfer function
  $G(s) = \mathcal{L}\{g(\tau)\}$ characterizing the kernel $g(\tau)$ is improper, then
  for any input bound $u_{\max} > 0$, there exists a valid bounded disturbance 
  trajectory $n(t) \in L_\infty$ such that the control effort $u(t)$ required to 
  satisfy the constraint $C(t) \ge 0$ strictly exceeds the admissible set 
  $\mathcal{U}$ at some time $t > 0$.
\end{theorem}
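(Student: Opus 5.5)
We prove the statement by constructing, for a given $u_{\max}$, a bounded sinusoidal measurement disturbance whose image under the improper kernel is too large for any admissible input to compensate.

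\emph{Step 1: use impropriety.} Write $G(s) = \mathcal{L}\{g(\tau)\}$. If $G$ is improper, then by Theorem~\ref{thm:cbfrep} and \eqref{eq:unified_kernel} it takes the form $G(s) = c_m s^m + \dots$ with $m \ge 1$ and $c_m \neq 0$. In the HOCBF/ECBF case $m = n$ and $c_n = 1$. Hence
\begin{equation}
|G(j\omega)| \ge \tfrac{|c_m|}{2}\,\omega^m \quad \text{for all sufficiently large } \omega .
\end{equation}
The point is that the kernel's gain grows without bound at high frequency.

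\emph{Step 2: choose the disturbance.} The sensed boundary signal is $h(x(t)) + \epsilon(t)$, where $\epsilon$ is the component of the disturbance $n(t)$ that $h$ picks up to first order through $\nabla h$. By hypothesis the admissible disturbance class contains signals with energy at arbitrarily high frequencies. We choose
\begin{equation}
\epsilon(t) = A \sin(\omega t + \phi),
\end{equation}
with a fixed amplitude $A$, so that $n \in L_\infty$. Since $\Omega$ is linear and time-invariant on the sensed signal,
\begin{equation}
\Omega[h + \epsilon](t) = \Omega[h](t) + A\,|G(j\omega)| \sin\!\big(\omega t + \phi + \angle G(j\omega)\big).
\end{equation}
The nominal part $\Omega[h](t)$ is bounded on a compact time window, because the state stays in a compact set and $f$, $g$ are locally Lipschitz. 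So we may take $|\Omega[h]| \le M$ there.

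\emph{Step 3: force saturation.} The constraint $C(t) \ge 0$ requires $b(x(t))u(t) \ge -\Omega[h+\epsilon](t)$. Because $b$ is continuous, $|b(x)| \le \bar b$ on the compact set, and the best any admissible input can achieve is $b u \le \bar b\, u_{\max}$. Now pick $\omega$ large enough that
\begin{equation}
A\,\tfrac{|c_m|}{2}\,\omega^m > M + \bar b\, u_{\max}.
\end{equation}
Choose the phase $\phi$ so that the sinusoid attains the value $-1$ at some time $t^\ast > 0$ within the window. At $t^\ast$ the required $u$ satisfies $|u(t^\ast)| > u_{\max}$, so it leaves $\mathcal{U}$. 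If instead $b(x(t^\ast)) = 0$, the constraint is simply infeasible for every $u$, which is even stronger.

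\emph{Main obstacle.} The difficult step is making Step 2 rigorous. Three issues arise:
\begin{itemize}
\item Closed-loop coupling: $u$ affects $x$, and $x$ affects $h$.
\item Nonlinearity: the map from $n$ to the sensed $h$ is nonlinear.
\item Regularity: a derivative kernel applied to a measured signal requires that signal to be smooth enough.
\end{itemize}
The plan is to argue pointwise in time rather than through steady-state frequency response. Over the short window the true state is governed by Lipschitz dynamics with bounded input, so its contribution to $\Omega[h]$ stays uniformly bounded in $\omega$. Meanwhile the $m$-th derivative of the injected sinusoid is exactly $A\omega^m \sin(\cdot)$, which needs no steady-state assumption. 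We also linearize $h$ in the noise, taking $A$ small so that $\nabla h$ keeps a nonzero projection, with the quadratic remainder absorbed into $M$. This is where the hypothesis that the disturbance class is rich enough, meaning non-zero spectral energy at arbitrarily high frequencies, is actually used.
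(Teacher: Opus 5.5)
Your argument is correct. It uses the same mechanism as the paper's proof: an improper kernel has unbounded gain $|G(j\omega)|$, so a bounded disturbance at high enough frequency pushes the input demanded by the active constraint out of $\mathcal{U}$.

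The difference is in execution, and it favours your version. The paper's proof has four weaknesses:
\begin{itemize}
\item It works only with $\sup_\omega |G(j\omega)| = \infty$ and an unspecified disturbance ``with spectral energy near $\omega^\ast$''.
\item It normalizes by $M = u_{\max}|b(x(t))|/\|n\|_{L_\infty}$, which depends on $n$ and, in closed loop, on $x(t)$; both are fixed only after $\omega^\ast$ is chosen.
\item It reads a frequency-response gain as a pointwise bound and drops the noise-free part of $\Omega[h]$.
\item It uses the sign-blind test $|\Omega[h](t)| > u_{\max}|b(x(t))|$, although $b u \ge -\Omega$ is infeasible only when $\Omega$ is sufficiently negative.
\end{itemize}
It also appends an open-neighbourhood claim that the statement does not require. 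Your construction repairs each weakness. The explicit sinusoid has exact pointwise derivatives, so no steady-state argument is needed. Your constants $M$ and $\bar b$ are uniform in $\omega$ and fixed before $\omega$. Your choice of a time where the phase-shifted sinusoid equals $-1$ handles the sign.

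Two minor points remain.

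First, improperness alone does not give the polynomial form you cite from Theorem~\ref{thm:cbfrep}. For a general improper rational $G$, write $G = P + G_{\mathrm{sp}}$, with $P$ a polynomial of degree at least one and $G_{\mathrm{sp}}$ strictly proper. The response of $G_{\mathrm{sp}}$ to the sinusoid is bounded on $[0,T]$ by $A\int_0^T |g_{\mathrm{sp}}|$, uniformly in $\omega$, so it can go into $M$.

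Second, under the nonlinear sensing model $h(x+n)$ with $n = A v \sin(\omega t + \phi)$, the quadratic remainder cannot be absorbed into $M$. After $m$ time derivatives it is of order $A^2\omega^m$, which is not bounded uniformly in $\omega$. Instead, first fix $A$ small enough that this remainder is dominated by the linear term, of size $A\,|\nabla h \cdot v|\,\omega^m$, and only then let $\omega \to \infty$. Alternatively, keep the additive model $h(x) + \epsilon$ of your Step~2; there you need neither the linearization nor smoothness of the closed-loop trajectory.
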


\begin{proof}
  When the constraint becomes active at the spatial boundary, we set
  $C(t) = 0$ to isolate the required nominal control effort $u^*(t)$.
  Assuming $b(x(t)) \neq 0$, the required control effort is algebraically
  given by $ u^*(t) = - \frac{1}{b(x(t))} \Omega[h](t) $. To evaluate
  feasibility under bounded control inputs $\mathcal{U}$, we examine the operator
  $\Omega$ in the frequency domain. Because $G(s)$ is improper, its numerator
  degree exceeds its denominator degree, implying:
  $ \sup_{\omega \ge 0} |G(j\omega)| = \infty $. Consequently, for any bound
  $M > 0$, there exists a frequency $\omega^* > 0$ such that $|G(j\omega^*)| > M$.

  Feasibility of the QP requires $|u^*(t)| \le u_{\max}$, which corresponds to the bound $|\Omega[h](t)| \le u_{\max} |b(x(t))|$. Choosing $M = \frac{u_{\max} |b(x(t))|}{\|n\|_{L_\infty}}$, the hypothesis guarantees the existence of a bounded disturbance $n(t)$ with spectral energy near $\omega^*$. Because the improper operator $\Omega$ provides no intrinsic high-frequency attenuation, evaluating $\Omega[h]$ under this disturbance yields $|\Omega[h](t)| > u_{\max} |b(x(t))|$, implying $|u^*(t)| > u_{\max}$. Thus, $u^*(t) \notin \mathcal{U}$, and no admissible control can satisfy $C(t) \ge 0$ at time $t$.

  Finally, by the continuity of the continuous-time convolution operator $\Omega$ and the coefficient $b(x(t))$, the strict inequality $|u^*(t)| > u_{\max}$ holds strictly under small $L_\infty$ perturbations of $x(t)$. Thus, the set of infeasible trajectories forms a non-empty open neighborhood in the trajectory space $\mathcal{C}$.
\end{proof}

With the limitation of improper kernels established, we can immediately
demonstrate why the prevailing high-relative-degree safety frameworks
are susceptible to actuator saturation.

\begin{corollary}[Actuator saturation of classical HOCBF/ECBF]
  For systems of relative degree $n \ge 3$, the improper high-order
  spatial operator in both standard HOCBF and ECBF formulations induces
  an aggressive gain divergence $\mathcal{O}(\omega^n)$, making them susceptible to
  disturbance induced actuator saturation.
\end{corollary}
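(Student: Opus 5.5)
The corollary follows by combining Theorem~\ref{thm:cbfrep} with Theorem~\ref{thm:chatter}. The only new work is computing the transfer function of the unified kernel and reading off its degree.

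\textbf{Step 1: the transfer function of the kernel.} By Theorem~\ref{thm:cbfrep}, under linear decay both the HOCBF and the ECBF constraint take the affine form $C(t)=\Omega[h](t)+b(x(t))u(t)$, with kernel $g(\tau)=\delta^{(n)}(\tau)+\sum_{k=0}^{n-1}C_k\delta^{(k)}(\tau)$ as in \eqref{eq:unified_kernel}. The Laplace transform satisfies $\mathcal{L}\{\delta^{(k)}\}=s^k$, so
\[
G(s)=s^n+\sum_{k=0}^{n-1}C_k s^k .
\]
This is a polynomial of degree $n$ with denominator $1$, so it is improper whenever $n\ge 1$. For the HOCBF, the coefficients $C_k$ are the elementary symmetric polynomials of $\lambda_1,\dots,\lambda_n$, so $G(s)=\prod_{i}(s+\lambda_i)$. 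For the ECBF, $G(s)=p(s)$. Both cases are therefore covered by the same computation.

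\textbf{Step 2: the divergence rate.} Evaluate on the imaginary axis:
\[
|G(j\omega)|=\omega^n\Bigl|1+\sum_{k<n}C_k (j\omega)^{k-n}\Bigr| .
\]
The bracketed factor tends to $1$ as $\omega\to\infty$. Hence for $\omega$ beyond some explicit $\omega_0$ (for instance $\omega_0=\max\{1,2\sum_k|C_k|\}$), we get the two-sided bound $\tfrac12\omega^n\le|G(j\omega)|\le 2\omega^n$. This establishes the $\mathcal{O}(\omega^n)$ growth, with a matching lower bound. In particular $\sup_\omega|G(j\omega)|=\infty$, which is exactly the hypothesis Theorem~\ref{thm:chatter} requires. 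Invoking Theorem~\ref{thm:chatter} then gives, for every $u_{\max}>0$, a bounded disturbance under which the required $u^*$ leaves $\mathcal{U}$. That is the claimed susceptibility to saturation.

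\textbf{Step 3: why $n\ge 3$.} The improperness argument itself works for any $n\ge1$, so I would present the restriction to $n\ge3$ as quantitative rather than logical. To make this concrete, take a sinusoidal disturbance $n(t)=\epsilon\sin(\omega t)$, which has fixed $L_\infty$ norm $\epsilon$. By Step 2 the induced term in $\Omega$ has amplitude of order $\epsilon\omega^n$. The frequency at which saturation first occurs therefore scales like $(u_{\max}|b|/\epsilon)^{1/n}$. This threshold drops, and the gain grows more steeply, as $n$ increases; $n\ge3$ is the regime treated in the paper's third-order example.

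\textbf{Main obstacle.} The delicate step is justifying that the frequency-domain gain of Step 2 actually shows up in the time-domain constraint. The disturbance enters through the sensed state, so the signal acted on by $\Omega$ is $h(x+n)$, not a linear function of $n$. I would linearize: $h(x+n)\approx h(x)+\nabla h(x)^\top n$. Assuming $\nabla h(x)\neq 0$ near the boundary, choose $n$ along $\nabla h$ with small amplitude, so the linear term dominates. The $n$-th derivative of the resulting high-frequency component is then of order $\epsilon\omega^n$, while $\Omega$ applied to the unperturbed trajectory stays bounded. Taking $\omega$ large completes the argument. This is the same step left implicit in the proof of Theorem~\ref{thm:chatter}, and it is where I would take the most care.
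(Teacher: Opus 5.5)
Your proposal is correct and follows the paper's route. The paper's proof is the one-line combination of Theorems~\ref{thm:cbfrep} and~\ref{thm:chatter}, and your Steps 1--2 (identifying $G(s)=s^n+\sum_{k<n}C_k s^k$, which equals $\prod_i(s+\lambda_i)$ for the HOCBF and $p(s)$ for the ECBF, then bounding $\tfrac12\omega^n\le|G(j\omega)|\le 2\omega^n$ for large $\omega$) simply make that combination explicit, while Step 3 correctly notes that $n\ge 3$ is not logically needed. Your ``main obstacle'' linearization is unnecessary once Theorem~\ref{thm:chatter} is cited, and it only works under the paper's convention that $\Omega$ time-differentiates the sensed signal $h(x+n)$: if the HOCBF/ECBF is implemented as defined, by evaluating $\mathcal{L}_f^k h$ at the sensed state, no time derivative of $n(t)$ appears, so the noise enters only through frequency-independent (locally Lipschitz) gains and the $\epsilon\omega^n$ term does not arise.
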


\begin{proof}
  Follows directly from Theorems~\ref{thm:cbfrep} and~\ref{thm:chatter}.
\end{proof}

\begin{remark}[Extension to General Class $\mathcal{K}$ Functions]
  While Theorem~\ref{thm:chatter} and above corollaries are derived
  under linear class $\mathcal{K}$ representations to analyze frequency-domain
  operators, the actuator saturation extends to general nonlinear
  extended class $\mathcal{K}$ functions $\alpha_i(\cdot)$. Because exposing the control
  input $u$ for relative degree $n$ requires evaluating the $n$-th order
  Lie derivative $\mathcal{L}_f^n h(x)$, local linearization around any
  operational trajectory $x_0(t) \in \mathcal{C}$ yields a local Fréchet operator
  with an improper polynomial kernel degree of $n$. Thus, nonlinearly
  warping intermediate decay rates alters local boundary curvature but
  cannot eliminate the high-frequency gain divergence induced by the
  $n$-th order spatial differentiation.
\end{remark}

Faced with high-frequency noise amplified by spatial differentiation, an
intuitive and widely used engineering practice is to pre-filter state
measurements upstream of the safety filter. Applying a continuous
low-pass filter, dynamic observer, or Extended Kalman Filter
(EKF)~\cite{clark2021automatica} to smooth state estimates prior to
evaluating $h(x)$ effectively mitigates control chatter. However, as
formalized in Theorem~\ref{thm:extfilter}, naively coupling an upstream
temporal filter with a memoryless spatial barrier operator introduces a
structural trade-off between control authority, surrogate set dynamics,
and physical set safety.

\begin{theorem}[Structural Failure Modes of External Pre-Filtering]
  \label{thm:extfilter}
  Consider a nonlinear affine system of relative degree $n \ge 1$ with respect to a spatial safety boundary $h(x) \ge 0$. Let $F(s) = \frac{P(s)}{Q_m(s)}$ be an external, proper linear low-pass filter of order $m \ge 1$ with causal impulse response $f(\tau)$, generating the filtered spatial signal $\tilde{h}(t) = (f * h)(t) = \int_0^\infty f(\tau) h(t-\tau) \mathrm{d}\tau$. Applying a safety filtering operator to $\tilde{h}(t)$ exhibits the following structural failure modes:
  \begin{enumerate}
    \item \textbf{Loss of Control Authority (Truncated Differentiation):} If the safety operator is evaluated using only $n$ derivatives of $\tilde{h}(t)$, the control input $u(t)$ identically vanishes from the active constraint ($\mathcal{L}_g C(t) = 0$), rendering the QP independent of $u(t)$.
    \item \textbf{Loss of Physical Forward Invariance (Surrogate Mismatch):} If the safety operator is evaluated using $n+m$ derivatives to restore control authority, forward invariance is preserved exclusively for the surrogate set $\tilde{\mathcal{C}} = \{(x,z) \mid \tilde{h}(t) \ge 0\}$, which in general fails to guarantee forward invariance of the physical safe set $\mathcal{C} = \{x \mid h(x) \ge 0\}$.
    \item \textbf{Vulnerability to Boundary Penetration via Phase Lag:} Uncompensated causal memory $f(\tau)$ introduces temporal phase lag, which admits system trajectories where the physical state breaches the true boundary ($h(x(t)) < 0$) while the surrogate filter output remains nominally safe ($\tilde{h}(t) > 0$). Consequently, external pre-filtering can induce physical boundary penetration.
  \end{enumerate}
  Consequently, preserving control authority while maintaining forward invariance of the physical set $\mathcal{C}$ motivates the synthesis of a proper spatio-temporal filter kernel.
\end{theorem}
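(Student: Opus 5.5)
The approach is to realize the filter $F(s)=P(s)/Q_m(s)$ as an LTI state-space extension $\dot z = A_f z + B_f h(x)$, $\tilde h = C_f z + D_f h(x)$ with $z\in\mathbb{R}^m$. Then $\tilde h$ becomes a function of the augmented state $(x,z)$, and each failure mode reduces to a relative-degree count plus an explicit counterexample. Throughout I take $F$ strictly proper with relative degree $r = m - \deg P \ge 1$; the case $r=m$ (all-pole filter) gives the clean count $n+m$ in the statement. For general $r$ the counts become $n$ and $n+r$, and I would remark on this.

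\textbf{Item 1.} I will show by induction on $k$ that, along the augmented dynamics, $\tilde h^{(k)} = C_f A_f^{k} z + \sum_{j} C_f A_f^{k-1-j}B_f\, h^{(j)}$ for $k < r$. Each $h^{(j)}$ with $j\le n-1$ depends only on $x$, by the relative-degree assumption and the Lie-derivative identity $\mathcal{L}_f^k h = \frac{\mathrm{d}^k h}{\mathrm{d}t^k}$ recalled in Section~\ref{sec:unified-high-order}. The Markov parameters satisfy $C_f A_f^{i} B_f = 0$ for $i < r-1$. Hence $u$ first appears in $\tilde h^{(n+r)}$ with coefficient $C_f A_f^{r-1}B_f\, b(x)$, and any constraint built from $\tilde h,\dots,\tilde h^{(n)}$ satisfies $\mathcal{L}_g C = 0$. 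That is the claimed loss of authority.

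\textbf{Item 2.} The same computation shows that the $(n+r)$-th order operator is an HOCBF/ECBF for $\tilde h$ on the augmented system. Applying the forward-invariance argument of Section~\ref{sec:prel-geom-safety} with kernel \eqref{eq:unified_kernel} of order $n+r$ then yields invariance of $\tilde{\mathcal{C}}$. To show this need not transfer to $\mathcal{C}$, I note that $\tilde{\mathcal{C}}$ constrains only a weighted history of $h$: $\{\tilde h \ge 0\}$ does not imply $\{h\ge 0\}$ pointwise, because the map $h\mapsto \tilde h$ is a convolution and is not monotone in the sign of the current value. This is made concrete in item 3.

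\textbf{Item 3.} This is the main step, and I will handle it by explicit construction. Take the double integrator ($n=2$), $h=x_1$, and the first-order filter $F(s)=a/(s+a)$. Start at rest with $h(0)=\tilde h(0)=\varepsilon>0$ and $z(0)=\varepsilon$, and apply an admissible input driving $\dot x_1 = -v<0$. Then $h(t)=\varepsilon - vt$, while
\[
\tilde h(t) = \varepsilon - vt + \tfrac{v}{a}\bigl(1-e^{-at}\bigr),
\]
which is the phase lag $v/a$ in steady state. At $t^\star = 2\varepsilon/v$ with $\varepsilon < v/(4a)$, we get $h(t^\star) = -\varepsilon < 0$ but $\tilde h(t^\star) \ge -\varepsilon + \tfrac{v}{a}\bigl(1-e^{-2a\varepsilon/v}\bigr) > 0$ for small $\varepsilon$. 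This can be checked by expanding $1-e^{-y}\ge y - y^2/2$ with $y = 2a\varepsilon/v \le 1/2$. I will also check that the surrogate constraint from item 2 remains satisfied along this trajectory, or choose its decay gains loose enough that it does, so the trajectory is genuinely admitted. I expect this verification, together with stating the general (nonlinear, arbitrary $F$) version via the causal lag $\int_0^\infty \tau f(\tau)\,\mathrm{d}\tau>0$, to be the main obstacle. For the general version, I would use the first-order expansion $\tilde h(t)\approx h(t) - \dot h(t)\int_0^\infty \tau f(\tau)\,\mathrm{d}\tau$ for a filter with unit DC gain, which gives $\tilde h > h$ whenever $\dot h<0$. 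This reproduces the same breach window of width $\mathcal{O}(1/\text{bandwidth})$.
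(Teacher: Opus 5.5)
Your proposal is correct. For items 1 and 2 it follows the paper's route. The paper also takes a strictly proper realization ($D_f=0$) and differentiates $\tilde h$ to order $n$. It notes that only $h,\dots,h^{(n-1)}$ appear, so $\partial\tilde h^{(n)}/\partial u=0$, and for item 2 it just observes $\tilde{\mathcal{C}}\neq\mathcal{C}\times\mathbb{R}^m$. Your Markov-parameter bookkeeping is sharper: it places the first appearance of $u$ at order $n+r$ with $r=m-\deg P$. The paper's count $n+m$ is exact only for all-pole filters; for $\deg P>0$, differentiating $n+m$ times brings in derivatives of $u$.

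The genuine difference is item 3. The paper argues generally. With $f\ge 0$ and unit DC gain, at the first contact instant $t^\star$ (where $h>0$ on the whole past) it gets $\tilde h(t^\star)=\int_0^\infty f(\tau)h(t^\star-\tau)\,\mathrm{d}\tau>0$, and then asserts penetration informally. That route covers any system and any nonnegative kernel. Combined with continuity just after $t^\star$, it is the rigorous way to get a general window where $h<0<\tilde h$. Your first-order expansion $\tilde h\approx h-\dot h\int_0^\infty\tau f(\tau)\,\mathrm{d}\tau$ needs only a positive mean delay rather than $f\ge 0$, but its remainder is uncontrolled, so it stays heuristic. Your explicit double-integrator example supplies what the paper leaves unproved: a closed-loop trajectory, actually admitted by the surrogate constraint, that leaves $\mathcal{C}$. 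That example is also what turns \emph{in general fails} in item 2 into a proven statement rather than a remark about unequal sets.

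A few small repairs:
\begin{itemize}
\item On the double integrator, $h(t)=\varepsilon-vt$ requires $x_2(0)=-v$ and $u\equiv 0$. So \emph{at rest} can only refer to the filter initialization $z(0)=h(0)$; under that reading your formula for $\tilde h$ is right.
\item The identity for $\tilde h^{(k)}$ should be stated for all $k$, with $j=0,\dots,k-1$. The vanishing Markov parameters then truncate the sum at $j=k-r$. Restricted to $k<r$, as you wrote it, the sum is empty.
\item The step you flag as the main obstacle does go through. Take HOCBF gains $\lambda_1=\lambda_2=\lambda_3=\lambda$. Along $u\equiv 0$ on $[0,t^\star]$:
\begin{itemize}
\item $\dddot{\tilde h}\ge 0$;
\item $3\lambda\ddot{\tilde h}\ge -3a\lambda v$;
\item $3\lambda^2\dot{\tilde h}\ge -6a\lambda^2\varepsilon$, via $1-e^{-y}\le y$;
\item $\lambda^3\tilde h\ge \lambda^3\varepsilon/2$, since $\tilde h$ is decreasing with $\tilde h(t^\star)\ge\varepsilon/2$.
\end{itemize}
Hence the constraint holds once $\lambda^3\varepsilon/2\ge 6a\lambda^2\varepsilon+3a\lambda v$. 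This choice also gives $\psi_1(0)=\lambda\varepsilon\ge 0$ and $\psi_2(0)=\lambda^2\varepsilon-av\ge 0$. With $u_{\text{nom}}\equiv 0$ the surrogate QP therefore returns $u\equiv 0$, and the breach is genuinely admitted.
\end{itemize}
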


\begin{proof}
\textbf{Part 1: Loss of Control Authority}

Let the proper low-pass filter $F(s) = \frac{P(s)}{Q_m(s)}$, where $\deg(P) < \deg(Q_m) = m$, be represented by its minimal state-space realization:
\begin{equation}
    \dot{z}(t) = A_f z(t) + B_f h(x(t)), \quad \tilde{h}(t) = C_f z(t)
\end{equation}
where $z(t) \in \mathbb{R}^m$ and the direct feedthrough matrix $D_f = 0$. Successively differentiating the filtered output $\tilde{h}(t)$ up to the $n$-th order yields:
\begin{equation}
    \tilde{h}^{(n)}(t) = C_f A_f^n z(t) + \sum_{k=0}^{n-1} C_f A_f^{n-1-k} B_f h^{(k)}(x(t))
\end{equation}
The highest order spatial derivative present in this summation is $h^{(n-1)}(x(t))$. By definition of relative degree $n$, the control input $u(t)$ enters exclusively through $h^{(n)}(x(t)) = \mathcal{L}_f^n h(x) + \mathcal{L}_g \mathcal{L}_f^{n-1} h(x) u$. Because spatial differentiation mapped into $\tilde{h}^{(n)}(t)$ truncates at $n-1$, we have:
\begin{equation}
    \frac{\partial \tilde{h}^{(n)}(t)}{\partial u} = 0
\end{equation}
Evaluating an $n$-th order safety constraint $C(t) \ge 0$ yields
$\mathcal{L}_g C(t) = 0$, detaching the QP from $u(t)$ and causing total loss of
control authority.

\textbf{Part 2: Surrogate Mismatch}

To restore $u(t)$ to the active constraint, the augmented state $(x, z)$
must be differentiated an additional $m$ times, raising the total system
relative degree to $n+m$. Enforcing an $(n+m)$-th order CBF
guarantees forward invariance of the augmented surrogate set:
\begin{equation}
  \tilde{\mathcal{C}} = \{ (x, z) \in \mathbb{R}^{n_x+m} \mid \tilde{h}(t) \ge 0 \}
\end{equation}
However, forward invariance of $\tilde{\mathcal{C}}$ implies only that
trajectories starting in $\tilde{\mathcal{C}}$ remain in
$\tilde{\mathcal{C}}$
($\tilde{h}(t) \ge 0 \implies \tilde{h}(t+\Delta t) \ge 0$). Because
$\tilde{\mathcal{C}} \neq \mathcal{C} \times \mathbb{R}^m$, preserving forward invariance on
$\tilde{\mathcal{C}}$ does not directly ensure forward invariance of the physical
spatial set $\mathcal{C} = \{x \in \mathbb{R}^{n_x} \mid h(x) \ge 0\}$.

\textbf{Part 3: Vulnerability to Boundary Penetration}

By definition of continuous causal filtering, the surrogate boundary signal is given by the convolution:
\begin{equation}
    \tilde{h}(t) = \int_0^\infty f(\tau) h(t-\tau) \mathrm{d}\tau
\end{equation}
where $f(\tau) \ge 0$ for all $\tau \ge 0$ and $\int_0^\infty f(\tau) \mathrm{d}\tau = 1$. Because $f(\tau)$ has non-zero measure on past states ($\tau > 0$), $\tilde{h}(t)$ accumulates historical trajectory data.

Consider a physical trajectory $x(t)$ approaching the physical boundary $h(x) = 0$ from the interior with negative spatial velocity $\dot{h}(t) \le -v_{\min} < 0$. For all past times $t-\tau < t^*$, the state resided in the strict interior ($h(t-\tau) > 0$). At the instant of physical contact $t^*$, where $h(x(t^*)) = 0$, evaluating the surrogate signal yields:
\begin{equation}
    \tilde{h}(t^*) = \int_0^\infty f(\tau) h(t^* - \tau) \mathrm{d}\tau > 0
\end{equation}
Because $\tilde{h}(t^*) > 0$, the controller enforcing safety on $\tilde{\mathcal{C}}$ perceives the system as safely inside the interior, withholding maximal boundary-deflecting effort. Consequently, $x(t)$ can penetrate into the physical unsafe region ($h(x(t)) < 0$) while $\tilde{h}(t)$ remains positive, violating physical safety.

\textbf{Part 4: Motivation for Proper Spatio-Temporal Kernels}

To simultaneously restore $u(t)$ to the constraint while compensating for dynamic filter states $z(t)$, the temporal filter $F(s) = \frac{P(s)}{Q_m(s)}$ and spatial derivative operator $G_{\text{improper}}(s) = s^n + \sum_{k=0}^{n-1} C_k s^k$ can be unified into a single continuous-time operator:
\begin{equation}
    G_{\text{kernel}}(s) = F(s) \cdot G_{\text{improper}}(s) = \frac{P(s) \left( s^n + \sum_{k=0}^{n-1} C_k s^k \right)}{Q_m(s)}
\end{equation}
Let $p = \text{deg}(P)$ and $m = \text{deg}(Q_m)$. The numerator degree is $p + n$. To prevent infinite high-frequency gain that induces QP infeasibility, $G_{\text{kernel}}(s)$ must be proper, requiring $\lim_{\omega \to \infty} |G_{\text{kernel}}(j\omega)| < \infty$. This holds when $m \ge p + n$, motivating the design of a \textbf{proper spatio-temporal kernel}.
\end{proof}

\begin{remark}[The Physical Reality of Phase Lag]
  The structural loss of forward invariance established in Theorem
  \ref{thm:extfilter} is fundamentally a manifestation of phase lag. By
  definition, any proper causal filter introduces phase lag, meaning the
  surrogate signal $\tilde{h}(t)$ is reacting to a delayed, outdated
  representation of the true spatial geometry $h(x)$. Consequently, the
  safety controller is systematically blinded to the true system state
  during aggressive transients, allowing the physical plant to crash
  through the actual boundary while the lagging filtered state still
  falsely registers as ``safe.''
\end{remark}

\section{The Space-Time Transform}
\label{sec:space-time-transform}

To resolve the unbounded gain demanded by instantaneous Lie
differentiation, we generalize the safety constraint by shifting from a
purely spatial geometry to a spatio-temporal convolution. As established
in Theorem~\ref{thm:extfilter}, resolving high-frequency instability
while maintaining control authority requires replacing the improper
spatial kernel with a proper spatio-temporal kernel.

By the convolution theorem, multiplying the spatial operator
$G_{\text{improper}}(s)$ by a causal temporal filter $F(s)$ in the
Laplace domain is equivalent to convolving their respective impulse
responses in the time domain.

\begin{definition}[The Space-Time Kernel]
  \label{def:sttkernel}
  Let $G_{\text{improper}}(s) = s^n + \sum_{k=0}^{n-1} C_k s^k$ be the
  Laplace-domain representation of the memoryless spatial operator for a
  boundary $h(x)$ of relative degree $n$. Its corresponding time-domain
  impulse response is the improper distributional kernel
  $g_{\text{improper}}(\tau) = \mathcal{L}^{-1}\{G_{\text{improper}}(s)\} =
  \delta^{(n)}(\tau) + \sum_{k=0}^{n-1} C_k \delta^{(k)}(\tau)$. Let
  $f(\tau)$ be the causal impulse response of a proper low-pass filter
  $F(s) = \frac{P(s)}{Q_m(s)}$, where $p = \text{deg}(P)$ and
  $m = \text{deg}(Q_m)$. Assuming the filter's relative degree satisfies
  $m - p \ge n$, the Space-Time Kernel $g_{\text{STT}}(\tau)$ is defined as
  the time-domain convolution of the spatial operator and the temporal
  filter:
\begin{equation}
  g_{\text{STT}}(\tau) = (f * g_{\text{improper}})(\tau) = \int_0^\tau f(\sigma) g_{\text{improper}}(\tau - \sigma) \mathrm{d}\sigma
  \label{eq:2}
\end{equation}
Because $m \ge n + p$, the resulting kernel $g_{\text{STT}}(\tau)$ is devoid
of Dirac delta distributions or their derivatives, constituting a
bounded, distributed function over time.
\end{definition}

\begin{definition}[The Space-Time Transform Operator]
  \label{sec:space-time-transform-1}
  The Space-Time Transform $\Omega_{\text{STT}}$ maps the spatial boundary
  trajectory $h(x(t))$ to a feasible control constraint by integrating
  the spatial state against the Space-Time Kernel. The transformed
  active constraint is given by:
\begin{equation}
  C_{\text{STT}}(t) = \Omega_{\text{STT}}[h](t) + b(x(t))u(t) \ge 0, \label{eq:stt_constraint}
\end{equation}
where the operator evaluates the causal spatial history of the system trajectory:
\begin{equation}
  \Omega_{\text{STT}}[h](t) = \int_0^\infty g_{\text{STT}}(\tau) h(x(t-\tau)) \mathrm{d}\tau.
  \label{eq:1}
\end{equation}
\end{definition}

By embedding the temporal filter dynamics directly into the definition
of the convolution kernel $g_{\text{STT}}(\tau)$, the Space-Time Transform
enforces safety using a proper, finite-gain operator. This rigorously
bounds the demanded control effort, structurally preventing the QP
crashes associated with zero-memory HOCBFs and ECBFs, while maintaining
exact mathematical forward invariance of the augmented system state.

\begin{definition}[Linear Spatial Extension]
To compactly represent the linear spatial geometry while aligning with Exponential Control Barrier Function (ECBF) notation, let $\mathbf{h}(x) = \begin{bmatrix} h(x), & \mathcal{L}_f h(x), & \dots, & \mathcal{L}_f^{n-1} h(x) \end{bmatrix}^\top$ denote the extended spatial boundary vector. We define the linear spatial extension $\alpha_L(\cdot)$ as the inner product with the ECBF polynomial coefficients $C = [C_0, C_1, \dots, C_{n-1}]^\top$:
\begin{equation}
    \alpha_L\big(\mathbf{h}(x)\big) := \sum_{k=0}^{n-1} C_k \mathcal{L}_f^k h(x) = C^\top \mathbf{h}(x)
\end{equation}
For a system of relative degree $n=1$, this gracefully collapses to the standard class $\mathcal{K}$ representation $\alpha_L\big(h(x)\big) = C_0 h(x)$.
\end{definition}

\begin{proposition}[Lie Derivative Formulation of the Space-Time
  Transform]
  \label{sec:space-time-transform-2}
By the associativity of convolution, the Space-Time Transform operator $\Omega_{\text{STT}}[h](t)$ can be equivalently expressed by applying the improper spatial kernel to the boundary trajectory prior to temporal integration. Utilizing the linear spatial extension $\alpha_L$, this yields an alternative formulation expressed entirely in terms of the spatial Lie derivatives evaluated over the system's causal history:
\begin{equation}
    \Omega_{\text{STT}}[h](t) = \int_0^\infty f(\tau) \Big[ \mathcal{L}_f^n h\big(x(t-\tau)\big) + \alpha_L\big(\mathbf{h}(x(t-\tau))\big) \Big] \mathrm{d}\tau \label{eq:stt_lie}
\end{equation}
where $f(\tau)$ is the causal impulse response of the proper low-pass
filter $F(s)$, and $\mathcal{L}_f^n h(x)$ denotes the $n$-th order Lie derivative
along the drift vector field $f(x)$.
\end{proposition}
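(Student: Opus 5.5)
We start from Definition~\ref{sec:space-time-transform-1}, where $\Omega_{\text{STT}}[h](t) = (g_{\text{STT}} * h)(t)$ with $g_{\text{STT}} = f * g_{\text{improper}}$ as in Definition~\ref{def:sttkernel}. The argument has three steps: regroup the convolution by associativity, evaluate the distributional derivatives, and convert those into Lie derivatives.

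\textbf{Step 1 (associativity and commutativity).} All kernels involved are causal, so the convolutions are well defined. The distribution $g_{\text{improper}}$ has compact support $\{0\}$, and convolving a compactly supported distribution with a locally integrable causal function is associative and commutative. Hence
\[
(f * g_{\text{improper}}) * h = f * (g_{\text{improper}} * h).
\]
I would work with causal (one-sided) signals by multiplying $h$ by a Heaviside cutoff at a start time $t_0$. Alternatively, one can assume the trajectory is defined on $(-\infty,t]$ with enough decay that $f*h$ exists; for a proper stable $F$ this holds whenever $h$ is bounded.

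\textbf{Step 2 (evaluate the inner convolution).} The identity $(\delta^{(k)} * h)(t) = h^{(k)}(t)$, already used in the proof of Theorem~\ref{thm:cbfrep}, gives
\[
(g_{\text{improper}} * h)(t) = h^{(n)}(t) + \sum_{k=0}^{n-1} C_k h^{(k)}(t).
\]
This requires $h\circ x$ to be $C^n$ along trajectories. That holds if $f$, $g$ and $h$ are smooth enough and $u$ is at least piecewise continuous, in which case $h^{(n)}$ is piecewise continuous.

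\textbf{Step 3 (Lie derivatives).} For $k<n$, relative degree $n$ gives $h^{(k)} = \mathcal{L}_f^k h(x(t))$, so the lower-order terms assemble into $\alpha_L(\mathbf{h}(x))$ by the definition of the linear spatial extension. The top term is $h^{(n)} = \mathcal{L}_f^n h + b(x)u$. Its autonomous part $\mathcal{L}_f^n h$ is the one appearing inside the integral in~\eqref{eq:stt_lie}. The input-dependent part is precisely the piece the paper isolates as the separate affine term $b(x(t))u(t)$ in~\eqref{eq:stt_constraint}, following the same convention as in the proof of Theorem~\ref{thm:cbfrep}, where $\Omega$ collects only autonomous terms. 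Substituting into $f * (\cdot)$ and writing the causal convolution as $\int_0^\infty f(\tau)\,[\cdot](t-\tau)\,\mathrm{d}\tau$ yields~\eqref{eq:stt_lie}.

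\textbf{Main obstacle.} The delicate step is this bookkeeping of the $u$-dependent part of $h^{(n)}$. Filtered through $f$, that part would contribute $\int f(\tau)\, b(x(t-\tau))u(t-\tau)\,\mathrm{d}\tau$, which is not the instantaneous term $b(x(t))u(t)$. I would therefore state the result for the drift-generated (autonomous) component explicitly: $\Omega_{\text{STT}}$ is interpreted as acting on the unforced Lie-derivative signal, consistent with Theorem~\ref{thm:cbfrep}. I would also note that, because $m-p\ge n$, the kernel $g_{\text{STT}}$ is an ordinary function. Equivalently, by integrating by parts $n$ times, one can move the derivatives from $h$ onto $f$. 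The boundary terms $f^{(j)}(0^+)$ vanish for $j<m-p-1$, which covers $j\le n-2$ when $m-p\ge n$. The top-order term needs an explicit check: $f^{(n-1)}(0^+)$ vanishes only if $m-p>n$. When $m-p=n$, this boundary term produces the direct-feedthrough contribution.
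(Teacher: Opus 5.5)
Your route is the paper's route: associativity to get $\Omega_{\text{STT}}[h] = f*(g_{\text{improper}}*h)$, the identity $\delta^{(k)}*h = h^{(k)}$, and $h^{(k)} = \mathcal{L}_f^k h$ for $k<n$. You diverge only at the top-order term, and there you are right and the paper's proof is not. The paper simply asserts that $g_{\text{improper}}*h$ ``equates to the pure Lie derivatives'' because the evaluation is ``prior to control input''. That silently discards $\int_0^\infty f(\tau)\,b(x(t-\tau))\,u(t-\tau)\,\mathrm{d}\tau$, which is generically nonzero. So \eqref{eq:stt_lie} does not follow from Definition~\ref{sec:space-time-transform-1} as literally written. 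It holds once $\Omega_{\text{STT}}$ is understood to filter only the drift certificate $\eta = \mathcal{L}_f^n h + \alpha_L(\mathbf{h})$, which is the convention the paper later imposes as Assumption~\ref{assum:drift_regularization}. Under that convention your argument is complete, but state the convention up front. Your Step~3 sentence calling the input part ``precisely'' the separate affine term $b(x(t))u(t)$ contradicts your own obstacle paragraph. Also, with the convention adopted, \eqref{eq:stt_lie} is essentially definitional rather than an identity for $g_{\text{STT}}*h$ along the full forced trajectory.

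Two technical slips need fixing.
\begin{enumerate}
\item The Heaviside option in Step~1 does not reproduce \eqref{eq:stt_lie}. Differentiating $H(\cdot-t_0)\,h$ produces terms $h^{(i)}(t_0)\,\delta^{(j)}(\cdot-t_0)$, and after convolution with $f$ these leave decaying transients $h^{(i)}(t_0)\,f^{(j)}(t-t_0)$ that are absent from \eqref{eq:stt_lie}. Use your bi-infinite-history alternative instead, which is consistent with Assumption~\ref{assum:initial_consistency}. There you need the derivatives $h^{(k)}$, $k\le n$, and not just $h$, to be bounded (or to grow slower than the filter's slowest decay rate into the past). This ensures each $f*h^{(k)}$ exists and the integration-by-parts terms at $\tau\to\infty$ vanish.
\item Drop the claim that $m-p\ge n$ makes $g_{\text{STT}}$ an ordinary function. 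Your own boundary-term computation shows that for $m-p=n$ it contains the impulse $f^{(n-1)}(0^+)\,\delta(\tau)$. The paper's Definition~\ref{def:sttkernel} makes the same error; for instance, under Assumption~\ref{assum:pole_sync} with $m=n$, $F(s)G_{\text{improper}}(s)=\prod_{i=1}^n p_i$ is a pure constant. This does not affect \eqref{eq:stt_lie}, provided $g_{\text{STT}}$ is handled as a distribution throughout.
\end{enumerate}
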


\begin{proof}
The Space-Time Transform evaluates the boundary via the unified spatial-temporal convolution $\Omega_{\text{STT}}[h] = (f * g_{\text{improper}}) * h$. By the associativity of the convolution operator, the purely spatial mapping can be isolated prior to temporal filtering:
\begin{equation}
    \Omega_{\text{STT}}[h] = f * (g_{\text{improper}} * h)
\end{equation}
Expanding the inner convolution with the exact definition of the improper spatial kernel analytically maps the boundary to its strict spatial derivatives. Because the spatial mapping relies on instantaneous evaluation along the vector field prior to control input, it equates to the pure Lie derivatives of the system:
\begin{align*}
  \footnotesize
    (g_{\text{improper}} * h)(t) &= \int_0^\infty \left( \delta^{(n)}(\sigma) + \sum_{k=0}^{n-1} C_k \delta^{(k)}(\sigma) \right) h(x(t-\sigma)) \mathrm{d}\sigma \nonumber \\
    &= \mathcal{L}_f^n h(x(t)) + \sum_{k=0}^{n-1} C_k \mathcal{L}_f^k h(x(t)) \nonumber \\
    &= \mathcal{L}_f^n h(x(t)) + \alpha_L\big(\mathbf{h}(x(t))\big)
\end{align*}
Substituting this spatial mapping back into the outer convolution directly yields the integral formulation over the causal past $(t-\tau)$ established in \eqref{eq:stt_lie}.
\end{proof}

\begin{remark}[Linear $\alpha_{L}$ restriction and Causal History]
  \label{rem:alpha}
  The restriction to a strictly linear extended class $\mathcal{K}$ function,
  denoted here as $\alpha_L$, is a mathematical necessity of the Space-Time
  Transform. Enforcing a bounded control effort relies entirely on
  factoring the barrier derivatives into a polynomial
  $G_{\text{improper}}(s)$ to form a proper, finite-gain transfer
  function $G_{\text{kernel}}(s)$ in the Laplace domain. This
  structurally confines the \textit{virtual dynamics} of the safety
  constraint to a Linear Time-Invariant (LTI) form, precluding the use
  of nonlinear class $\mathcal{K}$ functions. Crucially, this linear restriction
  applies only to the decay rate $\alpha_L(h)$; the underlying spatial
  geometry defined by the barrier candidate $h(x)$ can remain entirely
  nonlinear and non-convex.

  Furthermore, Equation~\eqref{eq:stt_lie} explicitly evaluates the
  spatial constraint over the \textit{causal past} $x(t-\tau)$, contrasting
  with predictive formulations that attempt to integrate over a future
  horizon $x(t+\tau)$. Evaluating a future horizon from current state data
  requires a Taylor series expansion, inherently demanding higher-order
  instantaneous spatial differentiation—the exact mechanism that
  precipitates high-frequency noise amplification and QP infeasibility.
  By restricting evaluation to the causal history, the Space-Time
  Transform successfully replaces improper predictive differentiation
  with a proper, finite-gain LTI filter mapping.
\end{remark}

\begin{remark}[Unification of Standard Memoryless Barrier Formulations]
The Space-Time Transform directly unifies and generalizes existing barrier frameworks. In the theoretical limit of infinite filter bandwidth ($\omega_P \to \infty$), the temporal impulse response collapses to the Dirac delta distribution, $f(\tau) = \delta(\tau)$. Substituting $f(\tau) = \delta(\tau)$ into \eqref{eq:stt_lie} evaluates the spatial integrand exclusively at $\tau = 0$:
\begin{equation}
    \Omega_{\text{STT}}[h](t) = \mathcal{L}_f^n h(x(t)) + \alpha_L\big(\mathbf{h}(x(t))\big)
\end{equation}
Thus, classical CBF, High-Order CBF (HOCBF), and Exponential CBF (ECBF)
formulations emerge as the memoryless, infinite-bandwidth limit of the
Space-Time Transform.
\end{remark}

\subsection{Operational Calculus of the Space-Time Transform}
\label{sec:oper-calc-space}

By mapping spatial boundary geometries into dynamic temporal states, the
Space-Time Transform ($\Omega_{\text{STT}}$) functions not merely as a
control architecture, but as a fundamental mathematical operator.

While the exhaustive functional analysis of this operator space is
deferred to future work, we formally state its fundamental algebraic
properties to establish a complete operational calculus for
spatio-temporal safety filters:

\begin{enumerate}
\item \textbf{Linearity (Superposition):} The Space-Time Transform is a
  linear functional over the space of extended boundary trajectories.
  For any spatial boundaries $h_1(x), h_2(x)$ and scalar constants
  $c_1, c_2 \in \mathbb{R}$:
    \begin{equation}
        \Omega_{\text{STT}}[c_1 h_1 + c_2 h_2](t) = c_1 \Omega_{\text{STT}}[h_1](t) + c_2 \Omega_{\text{STT}}[h_2](t)
    \end{equation}
    This superposition holds because both the underlying Lie differentiation and the temporal convolution against $f(\tau)$ are linear operators.
    
  \item \textbf{Causal Representation:} By the standard
    characterization of causal LTI systems, every valid, proper affine
    safety constraint that guarantees bounded control authority admits a
    unique impulse-response kernel in $L^1(\mathbb{R}_{\ge 0})$. Consequently, the
    class of Space-Time Operators
    $\{\Omega_{\text{STT}} : f \in L^1(\mathbb{R}_{\ge 0}), m \ge n\}$ represents a
    sufficient and closed class of proper LTI safety operators.

  \item \textbf{Uniqueness of the Synthesized Kernel:} For a given
    spatial boundary $h(x)$ of relative degree $n$, and a chosen proper
    temporal filter $F(s) = P(s)/Q_m(s)$, the resulting Space-Time
    Kernel $g_{\text{STT}}(\tau)$ is mathematically unique. The fundamental
    spatial decay operator $\alpha_L$ is algebraically locked to the unique
    polynomial expansion of $Q_m(s)$, preventing any arbitrary heuristic
    formulation of the safety geometry.
    
  \item \textbf{Invertibility (Spatial-Temporal Equivalence):}
    $\Omega_{\text{STT}}$ possesses a physical inverse mapping; to recover
    the required continuous spatial geometry $\alpha(h)$ from a designated
    temporal survival horizon $\tau$, one must invert the operator defining
    the temporal settling time.
\end{enumerate}

This final property, the formal inversion of the temporal mapping to
extract the exact spatial geometry, is the critical mechanism required
for physical hardware implementation, which we synthesize in the
following section.

\section{Synthesis of Spatial Decay via Temporal Horizon Inversion}
\label{sec:synth-spat-decay}

In the preceding section, we constructed a proper spatio-temporal kernel
to resolve the high-frequency infeasibility of classical barrier
methods. However, a critical design challenge remains: selecting the
exact coefficients $C_k$ that define the linear spatial extension $\alpha_L$.

A pervasive ambiguity in standard Control Barrier Function literature is
the conceptualization of the class $\mathcal{K}$ function
$\alpha(\cdot)$ as a rate of decay in \textit{time}. Rigorously,
$\alpha$ evaluates a decay in \textit{space}—it defines a geometric boundary
condition mapping a spatial margin $h(x)$ to a maximum allowable spatial
velocity. Because standard literature treats this spatial geometry as a
disconnected, free design parameter, the spatial coefficients $C_k$ are
often chosen via heuristic guesswork. This inadvertently desynchronizes
the spatial constraint from the physical temporal limits of the
actuators, directly causing phase lag and boundary overshoot.

The Space-Time Transform precludes this guesswork. To guarantee finite
control effort, the spatial geometry encapsulated by $C_k$ must be
analytically anchored to the physical bandwidth of the temporal filter
$F(s)$. To establish this exact mathematical bridge between space and
time, we analyze the physical time $\tau(x)$ required for a state
trajectory to traverse the spatial geometry to the boundary $h(x) = 0$.

\begin{definition}[Bhat Settling-Time Integral and the Survival Horizon]
  Following the finite-time stability framework of Bhat and
  Bernstein~\cite{bhat2000finite}, we can quantify the system's exact physical time-to-impact. Specifically, the survival horizon $\tau(x)$—the maximum allowable time for a trajectory starting at state $x$ with spatial margin $h(x) > 0$ to decay to a boundary threshold $\epsilon \ge 0$ under the continuous spatial geometry $\dot{h} = -\alpha(h)$—is defined by integrating over the spatial domain:
\begin{equation}
    \tau(x) = \int_\epsilon^{h(x)} \frac{1}{\alpha(s)} \mathrm{d}s \label{eq:bhat_integral}
\end{equation}
where $\epsilon = 0$ for finite-time stable decay functions (where $1/\alpha(s)$ is integrable at the origin), and $\epsilon \to 0^+$ represents the practical spatial settling threshold for asymptotic/exponential decay functions.
\end{definition}

\begin{remark}[Dimensionality Reduction and Gain Scheduling via $\tau(x)$]
  Mathematically, the survival horizon operates as a state-to-scalar
  mapping $\tau: \mathbb{R}^n \to \mathbb{R}_{\ge 0}$, collapsing an arbitrary
  $n$-dimensional state space into a single, physically meaningful
  scheduling variable in $\mathbb{R}$. Rather than attempting to tune or adapt
  barrier parameters across high-dimensional geometric boundaries,
  $\tau(x)$ serves as a low-dimensional scalar metric. This allows adaptive
  temporal filter bandwidths, dynamic gain schedules, and supervisory
  control switches to be parameterized entirely along a single physical
  axis: time-to-impact.
\end{remark}

\begin{lemma}[Inverse Bhat Decay Mapping under LTI Restrictions]
  \label{lem:inversebhat}
  To enforce a prescribed temporal survival horizon $\tau(h)$, the required
  spatial decay mapping $\alpha(h)$ is dictated by the inverse gradient of
  the settling-time function,
  $\alpha(h) = \left( \frac{\mathrm{d}\tau}{\mathrm{d}h} \right)^{-1}$.
  Furthermore, under the assumption that the augmented barrier dynamics
  must satisfy Linear Time-Invariant (LTI) properties (c.f.
  Remark~\ref{rem:alpha}), the unique admissible decay mapping is
  linear: $ \alpha(h) = \lambda_h h \label{eq:inverse_bhat} $, where
  $\lambda_h > 0$ defines the exact conversion rate (in rad/s) bridging the
  spatial margin to the required temporal horizon.
\end{lemma}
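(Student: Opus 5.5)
The argument splits into two parts: first the inverse-gradient identity, then the linearity claim under the LTI hypothesis.

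\textbf{Inverse-gradient identity.} I would start from the Bhat settling-time integral \eqref{eq:bhat_integral} and view the survival horizon as a function of the scalar margin, $\tau(h) = \int_\epsilon^{h} \alpha(s)^{-1}\,\mathrm{d}s$. Assume $\alpha$ is continuous and strictly positive on $(0,\infty)$, as an extended class-$\mathcal{K}$ function is. Then the fundamental theorem of calculus gives $\mathrm{d}\tau/\mathrm{d}h = 1/\alpha(h)$ for every $h>0$. This derivative is strictly positive, so $\tau$ is strictly increasing and therefore invertible. Solving for $\alpha$ gives $\alpha(h) = (\mathrm{d}\tau/\mathrm{d}h)^{-1}$.

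This also settles the converse direction. If a horizon $\tau(h)$ is prescribed that is $C^1$, strictly increasing, and satisfies $\tau(\epsilon)=0$, then defining $\alpha$ by this formula reproduces $\tau$ through \eqref{eq:bhat_integral}. Hence $\alpha$ is uniquely determined by $\tau$, since two decay maps with the same horizon have the same derivative $\mathrm{d}\tau/\mathrm{d}h$ pointwise.

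\textbf{Linearity under the LTI restriction.} Here I would invoke Remark~\ref{rem:alpha}. For the kernel $G_{\text{kernel}}(s)=F(s)G_{\text{improper}}(s)$ to exist as a rational transfer function, the virtual decay dynamics $\dot h = -\alpha(h)$ must be an LTI ODE. This means the map $h\mapsto\alpha(h)$ must be both additive and homogeneous, so that superposition holds; this is the same linearity used in the operational calculus of Section~\ref{sec:oper-calc-space}.

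A continuous additive function on $\mathbb{R}$ is linear (Cauchy's functional equation), so $\alpha(h)=\lambda_h h$ for some constant $\lambda_h$. Class-$\mathcal{K}$ monotonicity then forces $\lambda_h>0$. Substituting this into the Bhat integral gives the consistency check
\[
\tau(h)=\lambda_h^{-1}\ln(h/\epsilon).
\]
This shows the horizon is logarithmic and that $\epsilon\to0^+$ is needed, as the definition anticipates for exponential decay. The units of $\lambda_h$ are $\mathrm{s}^{-1}$, matching its interpretation as a rate in rad/s, i.e.\ a pole of the decay dynamics.

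\textbf{Main obstacle.} The delicate step is turning ``the augmented barrier dynamics are LTI'' into a precise functional-equation hypothesis. Remark~\ref{rem:alpha} states this informally, so I would make it explicit: I would require that the Laplace transform of the decay law $\dot h+\alpha(h)=0$ be a polynomial in $s$ acting on $H(s)$. That holds if and only if $\alpha$ is linear, because any nonlinear $\alpha$ generates harmonics and has no transfer-function representation. The regularity needed to exclude pathological solutions of Cauchy's equation comes for free from the continuity of $\alpha$.
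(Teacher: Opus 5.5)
Your proposal is correct and takes the paper's route: the fundamental theorem of calculus on the Bhat integral gives $\mathrm{d}\tau/\mathrm{d}h = 1/\alpha(h)$, and the LTI restriction of Remark~\ref{rem:alpha} then forces $\alpha(h)=\lambda_h h$. The differences are elaborations only: the paper derives the integral by separating variables in $\dot h=-\alpha(h)$ and simply asserts proportionality, while your Cauchy-equation argument, the class-$\mathcal{K}$ sign argument for $\lambda_h>0$, and the logarithmic horizon check spell this out. One small fix to your optional converse: assume $\tau'(h)>0$ rather than only ``$C^1$ and strictly increasing,'' since $\alpha=1/\tau'$ blows up wherever $\tau'$ vanishes.
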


\begin{proof}
Applying separation of variables to the scalar boundary differential equation $\dot{h}(t) = -\alpha(h(t))$ isolates the spatial and temporal differentials:
\begin{equation}
    \mathrm{d}t = -\frac{\mathrm{d}h}{\alpha(h)}
\end{equation}
Integrating the time domain from $t=0$ to the settling time $t = \tau(x)$, and the spatial domain from the initial margin $h(x)$ to the boundary threshold $\epsilon$, yields the exact formulation in \eqref{eq:bhat_integral}. 

By the Fundamental Theorem of Calculus, differentiating the temporal settling horizon $\tau(h)$ with respect to the initial spatial margin $h$ yields the marginal time cost per unit of space:
\begin{equation}
    \frac{\mathrm{d}\tau}{\mathrm{d}h} = \frac{1}{\alpha(h)} \implies \alpha(h) = \left( \frac{\mathrm{d}\tau}{\mathrm{d}h} \right)^{-1}
\end{equation}
In general, this inversion maps to a broad class of nonlinear decay functions. However, the Space-Time Transform strictly requires the resulting spatial operator to form an LTI system to permit frequency-domain synthesis. Under this LTI restriction, the decay function is constrained to be proportional to the state. Therefore, the unique admissible mapping is strictly linear, establishing $\alpha(h) = \lambda_h h$ and completing the derivation.
\end{proof}

\begin{remark}[Worst-Case Lower Bounding of Survival Horizon]
  Because safe execution requires the differential inequality
  $\dot{h}(x) \ge -\alpha\big(h(x)\big)$, evaluating the Bhat integral at the
  boundary equality $\dot{h} = -\alpha(h)$ yields a guaranteed lower bound on
  the actual survival time: $\tau_{\text{actual}}(x) \ge \tau(x)$. Consequently,
  synthesizing $\alpha_L$ via inverse Bhat mapping guarantees that the
  temporal filter $F(s)$ retains sufficient bandwidth to process state
  transients even under the most aggressive allowable approach to the
  safety boundary.
\end{remark}

\begin{remark}[The $F(s) \rightarrow \tau \rightarrow \alpha \rightarrow C_k$ Mapping Pipeline]
  \label{rem:alphaL}
  Lemma~\ref{lem:inversebhat} formalizes the direct causal pipeline
  linking physical temporal hardware constraints to the spatial barrier
  coefficients $C_k$:
\begin{enumerate}
    \item \textbf{Actuator Bandwidth to Filter Poles ($F(s)$):} The plant's actuator bandwidth $\omega_P$ upper-bounds the bandwidth of the temporal filter $F(s)$. The dominant pole $p_{\text{dom}}$ of $F(s)$ defines the fastest allowable physical time constant of the closed-loop system.
    \item \textbf{Filter Poles to Survival Horizon ($\tau(x)$):} To prevent transient phase lag and boundary overshoot, the target temporal survival horizon must be bounded by this physical time constant, meaning the marginal time cost scales as $\frac{\mathrm{d}\tau}{\mathrm{d}h} \sim \frac{1}{|p_{\text{dom}}|}$.
    \item \textbf{Horizon Inversion to Spatial Decay ($\alpha_L$):} Inverting
      $\tau(h)$ via Lemma~\ref{lem:inversebhat} forces the fundamental
      spatial decay rate to perfectly match the dominant temporal filter
      pole: $\lambda_h = |p_{\text{dom}}|$.
\end{enumerate}
Expanding this relation across an $n$-th order relative degree system
dictates that the spatial decay operator must exactly mirror the
characteristic polynomial of the temporal filter. Thus, the required
spatial coefficients $C_k$ are derived by expanding the polynomial
$\prod_{i=1}^n (s + p_i)$, natively synchronizing the linear spatial
extension $\alpha_L\big(\mathbf{h}(x)\big)$ with the temporal memory lag of
$F(s)$.
\end{remark}

\section{Robust Forward Invariance via the Space-Time Transform}
\label{sec:robust-forward-invariance}

We now present the main theoretical result of this paper. As established
in Theorem~\ref{thm:extfilter}, naive external pre-filtering compromises
physical safety; uncompensated temporal phase lag tricks the controller
into withholding actuation, leading to uncontrolled boundary
penetration. Furthermore, strict forward invariance of the exact
physical boundary ($h(x) \ge 0$) is mathematically irreconcilable with the
time delay inherent to any strictly proper causal filter.

Rather than ignoring this physical reality, the Space-Time Transform
(STT) resolves it structurally. By utilizing the architectural
separation and operator positivity established in Lemmas
\ref{lem:relative_degree}--\ref{lem:filter_boundedness}, the STT
framework collapses the uncompensated phase lag into a strictly bounded
dynamic discrepancy. This allows us to invoke the Input-to-State Safe
Control Barrier Function (ISS-CBF) framework~\cite{kolathaya2018input}.
We prove that while causal filtering makes exact boundary tracking
impossible, STT transforms uncontrolled boundary penetration into a
formally quantifiable and bounded ultimate safe set.

The forward invariance proof relies on six explicit mathematical and
structural assumptions regarding the temporal filter $F(s)$.

\begin{assumption}[Structure as an EMA Cascade / External Positivity]
  \label{assum:ema_cascade}
  The temporal filter $F(s)$ is restricted to a cascade of $m$ first-order Exponential Moving Average (EMA) sections with unit DC gain ($F(0) = 1$):
  \begin{equation}
    F(s) = \prod_{i=1}^{m} \frac{p_i}{s+p_i}
  \end{equation}
  This guarantees that the filter's continuous-time impulse response is strictly non-negative ($f(t) \ge 0$ for all $t \ge 0$), making $F(s)$ an externally positive causal system.
\end{assumption}

\begin{assumption}[Strict Stability and Real Poles]
\label{assum:strict_stability}
All poles of $F(s)$ are strictly real and positive ($p_i > 0$ for all $i \in \{1, \dots, m\}$). This prevents oscillatory behavior in memory regularization and ensures Bounded-Input Bounded-Output (BIBO) stability with exponential memory decay.
\end{assumption}

\begin{assumption}[Minimum Filter Order]
\label{assum:min_filter_order}
The filter dimension $m$ must be greater than or equal to the relative degree $n$ of the physical spatial barrier $h(x)$ ($m \ge n$). This ensures the filter has enough temporal state capacity to synchronize with all higher-order spatial derivatives of $h(x)$.
\end{assumption}

\begin{assumption}[Pole Synchronization / Temporal Horizon Inversion]
  \label{assum:pole_sync}
  The first $n$ poles of the temporal filter $F(s)$ are explicitly matched to the roots of the spatial barrier operator $G_{\mathrm{improper}}(s)$:
  \begin{equation}
    G_{\mathrm{improper}}(s) = s^n + \sum_{k=0}^{n-1} C_k s^k = \prod_{i=1}^{n} (s + p_i)
  \end{equation}
  The coefficients $C_k$ of the linear spatial extension
  $\alpha_L\big(\mathbf{h}(x)\big) = \sum_{k=0}^{n-1} C_k \mathcal{L}_f^k h(x)$ are
  derived by expanding these shared factors. This follows from
  Section~\ref{sec:synth-spat-decay}.
\end{assumption}

\begin{assumption}[Selective Drift Regularization / Architectural Separation]
\label{assum:drift_regularization}
The operator $F(s)$ acts exclusively on the drift-induced spatial
certificate
$\eta(t) = \mathcal{L}_f^n h(x(t)) + \alpha_L\big(\mathbf{h}(x(t))\big)$. The
instantaneous control actuation effort $g(x(t))u(t)$ bypasses $F(s)$
entirely and enters the STT constraint instantaneously. This follows
from Definition~\ref{sec:space-time-transform-1}.
\end{assumption}

\begin{assumption}[Initial State Consistency]
\label{assum:initial_consistency}
The internal filter state vector $\mathbf{z}(0) = [z_1(0), \dots, z_m(0)]^\top$ is initialized consistent with the historical trajectory of $\eta(t)$, satisfying $\mathbf{z}(0) \ge 0$ whenever $x(0) \in \mathcal{C}$.
\end{assumption}

\begin{lemma}[Relative Degree 1 Control Authority]
\label{lem:relative_degree}
Consider the nonlinear affine system $\dot{x} = f(x) + g(x)u$ and
spatial boundary $h(x)$ of relative degree $n$. Under
Assumption~\ref{assum:drift_regularization} (Selective Drift
Regularization), the Space-Time Transform constraint
$C_{\mathrm{STT}}(t) = (f*\eta)(t) + g(x(t))u(t) \ge 0$ preserves a relative
degree of 1 with respect to the control input $u(t)$:
\begin{equation}
\frac{\partial C_{\mathrm{STT}}}{\partial u} = g(x(t)) \neq 0.
\end{equation}
Hence, the STT-QP controller retains direct, instantaneous control authority at every instant $t \ge 0$.
\end{lemma}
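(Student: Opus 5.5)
The plan is to read off the partial derivative of $C_{\mathrm{STT}}(t)$ with respect to the instantaneous input $u(t)$ directly from its structure. First, under Assumption~\ref{assum:drift_regularization}, write the constraint as
\begin{equation*}
C_{\mathrm{STT}}(t) = \int_0^\infty f(\tau)\,\eta(t-\tau)\,\mathrm{d}\tau + b(x(t))u(t),
\end{equation*}
with $\eta(t) = \mathcal{L}_f^n h(x(t)) + \alpha_L(\mathbf{h}(x(t)))$. This matches Proposition~\ref{sec:space-time-transform-2} and Definition~\ref{sec:space-time-transform-1}. The input-gain term written $g(x(t))$ in the statement should be read as $b(x) = \mathcal{L}_g\mathcal{L}_f^{n-1}h(x)$ from \eqref{eq:stt_constraint}, and I would state this identification explicitly at the start.

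Second, show that the filtered term does not depend on the current value $u(t)$. The integrand $\eta$ consists only of drift Lie derivatives $\mathcal{L}_f^k h$ for $k \le n$. These are functions of the state alone and contain no $u$, because $u$ first appears through $\mathcal{L}_g\mathcal{L}_f^{n-1}h$, which Assumption~\ref{assum:drift_regularization} routes around $F(s)$. Past inputs can influence $x(t-\tau)$, but the functional $\int_0^\infty f(\tau)\eta(t-\tau)\,\mathrm{d}\tau$ has no point mass at $\tau=0$: $f$ is a bounded $L^1$ function by Assumptions~\ref{assum:ema_cascade}--\ref{assum:strict_stability}. Changing $u$ at a single instant, or on a set of measure zero, therefore leaves the filtered term unchanged. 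Equivalently, in the state-space realization $\dot{\mathbf z} = A_f\mathbf z + B_f\eta$ with $(f*\eta)(t) = C_f\mathbf z(t)$ and $D_f=0$, this term is a function of the filter state $\mathbf z(t)$ only. The cascade is strictly proper, so there is no feedthrough.

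Third, compute $\partial C_{\mathrm{STT}}/\partial u = b(x(t))$, so $u$ enters the constraint algebraically, with relative degree one in the QP sense.

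The main obstacle is the nonvanishing claim $b(x(t)) \neq 0$. I would obtain it from the definition of relative degree $n$, which requires $\mathcal{L}_g\mathcal{L}_f^{n-1}h(x) \neq 0$ on the region of interest. To make the statement hold for every $t \ge 0$, I would either impose this uniform relative-degree condition on $\mathcal{X}$ explicitly, or restrict to trajectories that remain in a neighborhood of $\mathcal{C}$ where it holds. The argument is then complete.
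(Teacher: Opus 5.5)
Your proposal is correct and takes the same route as the paper. The paper's proof just notes that, under Assumption~\ref{assum:drift_regularization}, the control term bypasses the convolution, so differentiating $C_{\mathrm{STT}}$ with respect to $u$ returns the input gain. Your version is more careful on two points the paper leaves implicit: reading the gain as $b(x)=\mathcal{L}_g\mathcal{L}_f^{n-1}h(x)$, and making the uniform relative-degree hypothesis behind $b(x(t))\neq 0$ explicit, a claim the paper asserts without justification.
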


\begin{proof}
Because $F(s)$ acts exclusively on the unactuated drift certificate $\eta(t) = \mathcal{L}_f^n h(x) + \alpha_L(\mathbf{h}(x))$, the control term $g(x)u(t)$ bypasses the temporal convolution block entirely. Differentiating $C_{\mathrm{STT}}(t)$ with respect to $u(t)$ yields $g(x(t))$, identical to classical relative-degree-1 CBFs.
\end{proof}

\vspace{1em}

\begin{lemma}[Monotonicity and Positivity Preservation of the EMA Kernel]
\label{lem:positivity}
Let the temporal filter $F(s) = \prod_{i=1}^m \frac{p_i}{s+p_i}$ satisfy Assumptions A1–A2 ($p_i > 0$). Then its continuous-time impulse response $f(t) = \mathcal{L}^{-1}\{F(s)\}$ is strictly non-negative for all $t \ge 0$. Consequently, the temporal convolution operator $\Omega[y] = (f * y)(t)$ is a linear monotone operator:
\begin{equation}
a(t) \ge b(t) \implies (f * a)(t) \ge (f * b)(t), \quad \forall t \ge 0.
\end{equation}
\end{lemma}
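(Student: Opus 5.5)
The plan is to establish non-negativity of the impulse response by induction on the number $m$ of cascaded EMA sections, working directly in the time domain. Monotonicity then follows from linearity of convolution.

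For $m=1$, each factor $\frac{p_i}{s+p_i}$ with $p_i>0$ (Assumption~\ref{assum:strict_stability}) has causal impulse response $f_i(t) = p_i e^{-p_i t}\,\mathbf{1}_{t\ge 0}$. This is strictly positive for $t \ge 0$ and integrable, with $\int_0^\infty f_i = 1$, which matches the unit DC gain of Assumption~\ref{assum:ema_cascade}.

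For the inductive step, write $F = F_{m-1}\cdot \frac{p_m}{s+p_m}$. By the convolution theorem,
\begin{equation*}
f(t) = (f_{m-1} * f_m)(t) = \int_0^t f_{m-1}(\sigma)\, p_m e^{-p_m (t-\sigma)}\,\mathrm{d}\sigma .
\end{equation*}
Both factors are non-negative on the integration domain, so $f(t)\ge 0$. Moreover, for $t>0$ the integrand is strictly positive on $(0,t)$, so $f(t)>0$ for $t>0$. The value $f(0)=0$ when $m\ge 2$ is harmless, since only non-negativity is claimed. Non-negativity and integrability are preserved under convolution, and $\|f\|_{L^1} = \prod_i F_i(0) = 1$.

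I will use the time-domain induction rather than partial fractions. With repeated or near-equal poles the partial-fraction expansion has terms of alternating sign, and positivity is not evident from it. The convolution argument handles distinct and repeated $p_i$ uniformly, so I expect no real obstacle here.

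For monotonicity, suppose $a(t)\ge b(t)$ for all $t$ in the causal history, and set $d = a-b \ge 0$. By linearity of the convolution integral,
\begin{equation*}
(f*a)(t) - (f*b)(t) = \int_0^\infty f(\tau)\, d(t-\tau)\,\mathrm{d}\tau \ge 0 ,
\end{equation*}
because the integrand is a product of non-negative functions.

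The only point requiring care is well-definedness. I will assume $a$ and $b$ are locally bounded, or bounded on the causal horizon, for example as continuous functions of a trajectory evolving in a compact set. Then $f\in L^1$ makes each integral absolutely convergent, and splitting the difference is legitimate. If the signals are only defined for $t\ge 0$, the history before $t=0$ is encoded in the filter state. In that case the same argument applies to the state-space realization: $\dot z_i = -p_i z_i + p_i z_{i-1}$ is a Metzler system with non-negative input matrix, so non-negative initial differences and non-negative input differences propagate to non-negative output differences. This links naturally to Assumption~\ref{assum:initial_consistency}.
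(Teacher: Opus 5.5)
Your proof is correct and takes the same route as the paper. The paper also writes $f = f_1 * \cdots * f_m$ as a convolution of the non-negative exponentials $p_i e^{-p_i t}$, then gets monotonicity from integrating against a non-negative kernel; your explicit induction, your reading of $a \ge b$ as holding over the whole causal history (which the implication actually requires), and your Metzler state-space remark for nonzero initialization add rigor without changing the argument.
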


\begin{proof}
Each first-order EMA stage has an impulse response $f_i(t) = p_i e^{-p_i t} \ge 0$ for $p_i > 0, t \ge 0$. The total impulse response $f(t) = (f_1 * f_2 * \dots * f_m)(t)$ is the dynamic convolution of non-negative functions, which is strictly non-negative ($f(t) \ge 0$). Linear integration against a non-negative kernel preserves inequalities on function spaces.
\end{proof}

\vspace{1em}

\begin{lemma}[Filter Discrepancy Bound via EMA Dynamics]
\label{lem:filter_boundedness}
Let the unactuated drift certificate $\eta(t)$ be Lipschitz continuous on a compact operating domain with bounded derivative $|\dot{\eta}(t)| \le L$, and let the first-order EMA filter satisfy:
\begin{equation}
    \dot{z}(t) = p\big(\eta(t) - z(t)\big), \quad p > 0.
\end{equation}
Define the filter discrepancy $\Delta(t) = \eta(t) - z(t)$. Under Assumption~\ref{assum:initial_consistency} (consistent initialization $\Delta(0) = 0$), the discrepancy is strictly bounded for all $t \ge 0$:
\begin{equation}
    |\Delta(t)| \le \frac{L}{p}\left(1 - e^{-pt}\right) \le \frac{L}{p} = \epsilon.
\end{equation}
Hence, the EMA filter introduces a uniformly bounded perturbation whose ultimate magnitude $\epsilon$ is inversely proportional to the filter bandwidth $p$.
\end{lemma}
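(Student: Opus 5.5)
The argument rests on one object: a scalar linear ODE for the discrepancy $\Delta(t)=\eta(t)-z(t)$, driven by $\dot\eta$. Differentiating the definition gives $\dot\Delta=\dot\eta-\dot z$. Substituting the EMA dynamics $\dot z=p(\eta-z)=p\Delta$ then yields
\begin{equation*}
\dot\Delta(t) = -p\,\Delta(t) + \dot\eta(t).
\end{equation*}
This is a stable first-order linear system with input $\dot\eta$. It is well-defined in the Carathéodory sense, since $\eta$ is Lipschitz and hence absolutely continuous with $\dot\eta$ existing almost everywhere and bounded by $L$.

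The next step is the variation-of-constants formula. With the consistent initialization $\Delta(0)=0$ from Assumption~\ref{assum:initial_consistency}, it gives
\begin{equation*}
\Delta(t) = \int_0^t e^{-p(t-s)}\,\dot\eta(s)\,\mathrm{d}s .
\end{equation*}
Taking absolute values and using $|\dot\eta(s)|\le L$ gives $|\Delta(t)|\le L\int_0^t e^{-p(t-s)}\mathrm{d}s = \frac{L}{p}(1-e^{-pt})$. Since $1-e^{-pt}\le 1$ for $t\ge 0$ and $p>0$, the uniform bound $L/p=\epsilon$ follows. The inverse proportionality to the bandwidth $p$ is visible directly in this expression.

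As an alternative that avoids integrating a possibly non-smooth $\dot\eta$, I would give a comparison-lemma argument. Let $V=\Delta^2/2$. Then
\begin{equation*}
\dot V = -p\Delta^2 + \Delta\dot\eta \le -p|\Delta|^2 + L|\Delta|,
\end{equation*}
so $|\Delta|$ satisfies $D^+|\Delta|\le -p|\Delta|+L$ almost everywhere. The comparison lemma against $\dot w=-pw+L$ with $w(0)=0$ reproduces the same bound.

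The main obstacle is not the computation but the regularity hypotheses. The argument needs $\eta(t)=\mathcal{L}_f^n h(x(t))+\alpha_L(\mathbf{h}(x(t)))$ to be Lipschitz in $t$ along closed-loop trajectories. This requires $x(t)$ to stay in the compact operating domain and $\dot x$ to be bounded there, which holds when $f,g$ are locally Lipschitz and $u\in\mathcal{U}$ is bounded. I would take these as the standing assumption stated in the lemma and note that $L$ depends on them.

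Finally, I would remark that for the $m$-stage cascade, applying the same bound stage-wise requires each intermediate output to be Lipschitz. That follows from $|\dot z_i|=p_i|\Delta_i|$, so the cascade case extends inductively. This extension is outside the lemma as stated and is not needed for its proof.
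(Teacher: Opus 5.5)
Your proposal is correct and matches the paper's proof: the paper derives $\dot\Delta=\dot\eta-p\Delta$, multiplies by the integrating factor $e^{pt}$ (equivalent to your variation-of-constants formula), and uses $\Delta(0)=0$ to get $\Delta(t)=\int_0^t e^{-p(t-\tau)}\dot\eta(\tau)\,\mathrm{d}\tau$. It then applies the triangle inequality with $|\dot\eta|\le L$, just as you do. Your comparison-lemma variant and regularity remarks are sound additions but not needed, since the lemma directly hypothesizes $|\dot\eta(t)|\le L$.
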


\begin{proof}
Differentiating the discrepancy $\Delta(t) = \eta(t) - z(t)$ with respect to time yields:
\begin{equation}
    \dot{\Delta}(t) = \dot{\eta}(t) - \dot{z}(t) = \dot{\eta}(t) - p\Delta(t).
\end{equation}
Multiplying by the integrating factor $e^{pt}$ provides:
\begin{equation}
    \frac{\mathrm{d}}{\mathrm{d}t}\left( e^{pt}\Delta(t) \right) = e^{pt}\dot{\eta}(t).
\end{equation}
Integrating over $\tau \in [0, t]$ and substituting the initial condition $\Delta(0) = 0$ gives:
\begin{equation}
    \Delta(t) = \int_0^t e^{-p(t-\tau)}\dot{\eta}(\tau)\mathrm{d}\tau.
\end{equation}
Applying the triangle inequality and substituting the Lipschitz bound $|\dot{\eta}(\tau)| \le L$:
\begin{equation}
    |\Delta(t)| \le L \int_0^t e^{-p(t-\tau)}\mathrm{d}\tau = \frac{L}{p}\left(1 - e^{-pt}\right) \le \frac{L}{p}.
\end{equation}
Setting $\epsilon = \frac{L}{p}$ completes the proof.
\end{proof}

\begin{remark}[Cascade Generalization and Non-Ideal Initialization]
\label{rem:cascade_bound}
For an $m$-th order EMA cascade $F(s) = \prod_{i=1}^m \frac{p_i}{s+p_i}$ satisfying Assumptions~\ref{assum:ema_cascade}--\ref{assum:strict_stability}, there exists a topology-dependent constant $\kappa(F) > 0$ such that the discrepancy is bounded by the slowest pole $p_{\min} = \min_{i} p_i$:
\begin{equation}
    |\Delta(t)| \le \kappa(F) \frac{L}{p_{\min}} = \epsilon, \quad \forall t \ge 0.
\end{equation}
When identical poles are used, $\kappa(F) = 1$. Furthermore, if $\Delta(0) \neq 0$, the discrepancy includes an exponentially decaying transient $|\Delta(0)|e^{-p_{\min}t}$ that asymptotically converges to the exact same ultimate bound $\epsilon$.
\end{remark}

\begin{theorem}[Robust Forward Invariance via ISS-CBF]
\label{thm:stt_forward_invariance}
Consider the nonlinear system $\dot{x} = f(x) + g(x)u$ with safe set $\mathcal{C} = \{x : h(x) \ge 0\}$. Under continuous enforcement of the Space-Time constraint $C_{\mathrm{STT}}(t) \ge 0$, the physical spatial boundary dynamics satisfy an Input-to-State Safe (ISS) Control Barrier Function condition with explicit perturbation bound $\epsilon = \kappa(F)\frac{L}{p_{\min}}$:
\begin{equation}
    \mathcal{L}_f^n h(x) + \alpha_L(\mathbf{h}(x)) + g(x)u(t) \ge -\epsilon.
\end{equation}
Consequently, the STT-QP guarantees the robust forward invariance of an ISS-relaxed safe set $\mathcal{C}_{\mathrm{ISS}} = \{x : h(x) \ge -\delta(\epsilon)\}$ for a class $\mathcal{K}_\infty$ function $\delta$, establishing an explicit analytical trade-off between high-frequency noise attenuation ($p_{\min}$) and spatial set deformation ($\delta(\epsilon)$).
\end{theorem}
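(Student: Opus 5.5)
The plan has two parts: first pass from the enforced filtered constraint to the unfiltered ISS-CBF inequality, then convert that inequality into robust invariance of the relaxed set.

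\textbf{Step 1 (from $C_{\mathrm{STT}}\ge 0$ to the ISS inequality).}
Write the filtered certificate as $(f*\eta)(t)=z(t)$, the output of the EMA cascade driven by $\eta(t)=\mathcal{L}_f^n h(x(t))+\alpha_L(\mathbf{h}(x(t)))$. By Lemma~\ref{lem:relative_degree}, the control term $g(x)u$ is not filtered (Assumption~\ref{assum:drift_regularization}), so enforcing $C_{\mathrm{STT}}(t)\ge 0$ reads
\[
z(t)+g(x(t))u(t)\ge 0 .
\]
Decompose $z=\eta-\Delta$. Then
\[
\eta(t)+g(x(t))u(t)\ge \Delta(t)\ge -|\Delta(t)| .
\]
Lemma~\ref{lem:filter_boundedness} gives $|\Delta(t)|\le L/p$ for a single stage under consistent initialization (Assumption~\ref{assum:initial_consistency}). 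Remark~\ref{rem:cascade_bound} extends this to $|\Delta(t)|\le \kappa(F)L/p_{\min}=\epsilon$ for the cascade. This yields exactly the displayed ISS-CBF inequality $\mathcal{L}_f^n h+\alpha_L(\mathbf{h})+g(x)u\ge-\epsilon$. Lemma~\ref{lem:positivity} is only needed to confirm that the filter output is an averaging of $\eta$. This keeps $L$ meaningful on the compact domain where $\eta$ is Lipschitz, and it justifies the nonnegative initialization.

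\textbf{Step 2 (from the ISS inequality to invariance of the relaxed set).}
Because the left side equals $h^{(n)}+\sum_k C_k h^{(k)}$ along trajectories, the inequality is the linear differential inequality
\[
G_{\mathrm{improper}}(\tfrac{d}{dt})\,h(t)=w(t),\qquad w(t)\ge-\epsilon .
\]
By Assumption~\ref{assum:pole_sync}, $G_{\mathrm{improper}}(s)=\prod_{i=1}^n(s+p_i)$ with real $p_i>0$. Factor it as a chain of first-order stages:
\[
\psi_0=h,\qquad \psi_i=\dot\psi_{i-1}+p_i\psi_{i-1},\qquad \dot\psi_{n-1}+p_n\psi_{n-1}\ge-\epsilon .
\]
The comparison lemma applied stage by stage gives
\[
\psi_{n-1}(t)\ge e^{-p_nt}\psi_{n-1}(0)-\tfrac{\epsilon}{p_n}\big(1-e^{-p_nt}\big).
\]
Propagating backward through each stage, using that the first-order solution operators have nonnegative kernels (the same positivity argument as Lemma~\ref{lem:positivity}), we obtain
\[
h(t)\ge -\frac{\epsilon}{\prod_i p_i}
\]
whenever all $\psi_i(0)\ge0$ (the HOCBF-style initialization). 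This gives $\delta(\epsilon)=\epsilon/\prod_{i=1}^n p_i$, which is linear and hence class $\mathcal{K}_\infty$. For $n=1$ this reduces to the standard ISSf-CBF result of~\cite{kolathaya2018input}, which I would cite as the base case.

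\textbf{Main obstacle.}
The delicate point is Step 2's dependence on initial conditions. Robust invariance of $\{h\ge-\delta\}$ alone is not true for $n\ge2$; it holds for the nested sets $\{\psi_i\ge -\epsilon/\prod_{j>i}p_j\}$. I would therefore state the invariant set as this intersection and note that its projection lies in $\mathcal{C}_{\mathrm{ISS}}$. A second issue is that $L$ in Lemma~\ref{lem:filter_boundedness} requires $\dot\eta$ bounded, which involves $u$ through $\nabla\eta\cdot g(x)u$. I would handle this by working on the compact operating domain with $u\in\mathcal{U}$ bounded, so that $L$ is finite. This is a possibly circular point that I would resolve by an a priori bounded-input argument.
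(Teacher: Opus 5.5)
Your proposal is correct. Step 1 matches the paper's proof essentially verbatim: the paper writes $\Gamma(t)=\eta(t)+g(x)u(t)=C_{\mathrm{STT}}(t)+\Delta(t)\ge\Delta(t)\ge-\epsilon$ using Lemma~\ref{lem:filter_boundedness} and Remark~\ref{rem:cascade_bound}. You depart from the paper in Step 2. The paper stops at $\Gamma\ge-\epsilon$ and simply invokes the ISSf-CBF results of~\cite{kolathaya2018input} to conclude invariance of $\{h\ge-\delta(\epsilon)\}$. You instead factor $G_{\mathrm{improper}}(s)=\prod_i(s+p_i)$ via Assumption~\ref{assum:pole_sync} and run the comparison lemma down the chain $\psi_{n-1},\dots,\psi_0$.

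Your route gives three things the citation does not:
\begin{enumerate}
\item It is self-contained. Kolathaya--Ames is a relative-degree-one result, so applying it to an $n$-th order inequality needs exactly the cascade you supply.
\item It makes the advertised trade-off explicit: $\delta(\epsilon)=\epsilon/\prod_{i=1}^n p_i=\epsilon/C_0$.
\item It correctly identifies the invariant object as the nested intersection $\bigcap_i\{\psi_i\ge-\epsilon/\prod_{j>i}p_j\}$ under HOCBF-style initialization. For $n\ge2$ the set $\{h\ge-\delta\}$ alone is not forward invariant: start on its boundary with $\mathcal{L}_f h<0$, which no input can affect. So your version sharpens, and partly corrects, the theorem as stated.
\end{enumerate}
The price is that your argument uses the real, positive, matched poles essentially, whereas the paper's citation nominally claims a generality it never establishes.

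Your caveat that $L$ depends on $u$ through $\dot\eta$ is genuine. The paper does not resolve it either; it simply assumes the bound on a compact operating domain in Lemma~\ref{lem:filter_boundedness}.

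Finally, if you want Step 1 not to lean on Remark~\ref{rem:cascade_bound}, the cascade bound is where Lemma~\ref{lem:positivity} genuinely enters. With $f\ge0$, unit DC gain, and consistent initialization, $\Delta(t)=\int_0^\infty f(s)\,[\eta(t)-\eta(t-s)]\,\mathrm{d}s$. Hence $|\Delta(t)|\le L\int_0^\infty s f(s)\,\mathrm{d}s=L\sum_i 1/p_i$, so you may take $\kappa(F)=p_{\min}\sum_i 1/p_i$. Note that this equals $m$, not $1$, for identical poles, and a ramp shows the bound is tight.
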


\begin{proof}
By Lemma~\ref{lem:relative_degree}, the STT-QP controller exercises instantaneous relative-degree-1 authority, actively enforcing $C_{\mathrm{STT}}(t) \ge 0$. 
The true physical spatial boundary dynamic is given by $\Gamma(t) = \eta(t) + g(x)u(t)$. The enforced STT constraint is defined as $C_{\mathrm{STT}}(t) = (f*\eta)(t) + g(x)u(t)$. Adding and subtracting $\eta(t)$ links the two conditions via the filter discrepancy:
\begin{equation}
    \Gamma(t) = C_{\mathrm{STT}}(t) + \eta(t) - (f*\eta)(t) = C_{\mathrm{STT}}(t) + \Delta(t).
\end{equation}
Enforcing $C_{\mathrm{STT}}(t) \ge 0$ yields:
\begin{equation}
    \Gamma(t) \ge \Delta(t).
\end{equation}
Applying Lemma~\ref{lem:filter_boundedness} and Remark~\ref{rem:cascade_bound} bounds the dynamic discrepancy by $\Delta(t) \ge -\epsilon$, where $\epsilon = \kappa(F)\frac{L}{p_{\min}}$. Therefore, the physical boundary dynamic satisfies:
\begin{equation}
    \Gamma(t) \ge -\epsilon.
\end{equation}
By standard Input-to-State Safe CBF results~\cite{kolathaya2018input}, subjecting a valid HOCBF dynamic to a strictly bounded perturbation $-\epsilon$ guarantees robust forward invariance of the deformed safe set $\mathcal{C}_{\mathrm{ISS}} = \{x : h(x) \ge -\delta(\epsilon)\}$.
\end{proof}

\section{Engineering Design Procedure and Real-Time Execution}
\label{sec:engin-design-proc}

With the theoretical bridge between the spatial geometry and temporal
filter established, the Space-Time Transform translates into a
deterministic, six-step design procedure for real-time control hardware.
By replacing improper differentiation with a causal LTI dynamic
extension, this architecture eliminates heuristic gain tuning and
provides a direct pipeline from hardware characterization to
discrete-time implementation.

\begin{enumerate}
    \item \textbf{Hardware Characterization:} Identify the physical limits of the plant, specifically the actuator bandwidth $\omega_P$ and the strict saturation bounds $\mathcal{U} = [-u_{\max}, u_{\max}]$.
    \item \textbf{Nominal Spatial Geometry:} Define the desired physical safety boundary $h(x)$ and determine its relative degree $n$ with respect to the control input.
    \item \textbf{Temporal Filter Design ($F(s)$):} Select a proper
      continuous-time low-pass filter $F(s)$ adhering to the
      restrictions in Section~\ref{sec:robust-forward-invariance}.
    \item \textbf{Analytic Spatial Synthesis ($C_k$):} Extract the
      characteristic polynomial from the denominator of $F(s)$.
      Following the horizon inversion pipeline
      (Section~\ref{sec:synth-spat-decay}), expand this polynomial to
      analytically extract the spatial coefficients $C_k$, fully
      defining the linear spatial extension
      $\alpha_L\big(\mathbf{h}(x)\big)$.
    \item \textbf{Dynamic State Extension:} Realize the temporal filter as a continuous LTI state-space system $(A_f, B_f, C_f)$. At each time step, feed the instantaneous, purely spatial Lie derivatives into the filter dynamics:
    \begin{equation}
        \dot{z}(t) = A_f z(t) + B_f \Big[ \mathcal{L}_f^n h(x(t)) + \alpha_L\big(\mathbf{h}(x(t))\big) \Big]
    \end{equation}
    The filter output rigorously evaluates the causal Space-Time integral: $\Omega_{\text{STT}}[h](t) = C_f z(t)$.
    \item \textbf{Discrete-Time QP Execution:} Discretize the LTI filter dynamics using a Zero-Order Hold (ZOH) for the hardware sample time $T_s$, yielding the discrete matrices $(A_d, B_d)$. At each discrete timestep $k$, update the filter state $z_{k+1} = A_d z_k + B_d w_k$, and solve the exact bounded Quadratic Program using the instantaneous control coefficient $b(x_k) = \mathcal{L}_g \mathcal{L}_f^{n-1}h(x_k)$:
    \begin{equation}
        \min_{u_k \in \mathcal{U}} \quad \frac{1}{2} \|u_k - u_{\text{nom}}\|^2
    \end{equation}
    \begin{equation*}
        \text{s.t.} \quad C_f z_k + b(x_k)u_k \ge 0
    \end{equation*}
\end{enumerate}

\section{Experimental Validation}
\label{sec:exper-valid}

To validate the theoretical guarantees of the Space-Time Transform (STT)
under high-frequency measurement noise, a third-order ($n=3$)
triple-integrator physical system $\dddot{x} = u$ was simulated under
symmetric actuator saturation bounds $u \in [-u_{\max}, u_{\max}]$ with
$u_{\max} = 3.2\text{ m/s}^3$. The plant is initialized in an aggressive
approach trajectory toward a hard spatial safety boundary at
$h(x) = x \ge 0$, with initial state
$(x_0, v_0, a_0) = (9.3\text{ m}, -6.0\text{ m/s}, 0.0\text{ m/s}^2)$
and a nominal baseline command $u_{\text{nom}} = -3.2\text{ m/s}^3$.

To stress-test spatial differentiation under realistic sensor regimes,
zero-mean Gaussian noise was injected into state feedback:
$\sigma_x = 0.1\text{ m}$ for position and $\sigma_v = 0.3\text{ m/s}$ for
velocity. A tight stochastic safety buffer
$\epsilon_{\text{robust}} = k_\sigma \sigma_x = 0.05\text{ m}$
($k_\sigma = 0.5$) was applied to evaluate performance near the boundary
edge. All architectures were executed at discrete steps of
$\Delta t = 0.005\text{ s}$ ($200\text{ Hz}$) using exact Zero-Order Hold
(ZOH) discretization over a $6.5\text{ s}$ horizon across $N = 50$ Monte
Carlo simulation runs.

\subsection{Evaluated Safety Filter Architectures}
\label{sec:eval-architectures}

Six safety filter configurations across four structural paradigms were benchmarked under identical initial conditions and noise sequences:

\begin{enumerate}
    \item \textbf{Nominal High-Order CBF (HOCBF):} Formulated with distinct spatial decay poles $a_1 = 1.3$, $a_2 = 1.5$, and $a_3 = 1.7$, yielding linear spatial gains $C_0 = a_1 a_2 a_3 = 3.315$, $C_1 = a_1 a_2 + a_1 a_3 + a_2 a_3 = 6.71$, and $C_2 = a_1 + a_2 + a_3 = 4.5$. Operating without a robust margin, the unsafe drift boundary evaluates to:
    \begin{equation}
        \alpha_{\text{HOCBF}}(x, v, a) = C_0 (x + n_x) + C_1 (v + n_v) + C_2 a
    \end{equation}
    The saturated jerk command is generated via $u(t) = \text{sat}\left(\max(-\alpha_{\text{HOCBF}}, u_{\text{nom}}), u_{\max}\right)$.

    \item \textbf{Nominal Exponential CBF (ECBF):} Parameterized by coincident spatial decay poles at $\omega_{\text{spatial}} = 1.5\text{ rad/s}$ and evaluated without a robust margin, yielding drift function $\alpha_{\text{ECBF}}(x, v, a) = C_0 (x + n_x) + C_1 (v + n_v) + C_2 a$.

    \item \textbf{Parameterized Barrier Functions (PBF):} Based on Alan
      et al.~\cite{alan2023parameterized,alan2025generalizing}, this
      architecture shifts the nominal HOCBF boundary by a parameterized
      spatial buffer $h^\star = \epsilon_{\text{robust}}$ to provide Input-to-State
      Safety against disturbances. The drift function becomes
      $\alpha_{\text{PBF}}(x, v, a) = C_0 (x + n_x - h^\star) + C_1 (v + n_v) +
      C_2 a$.

    \item \textbf{External Pre-Filtering:} Measurement feedback is passed through an external low-pass filter ($\omega_{\text{filter}} = 14.0\text{ rad/s}$) before differentiation:
    \begin{equation*}
        \dot{\tilde{h}}(t) = \omega_{\text{filter}} \left( (x(t) + n_x - \epsilon_{\text{robust}}) - \tilde{h}(t) \right)
    \end{equation*}

    \item \textbf{STT - Real Poles Topology:} Integrates spatio-temporal dynamics using a 3rd-order coincident real-pole filter kernel ($\omega_f = 14.0\text{ rad/s}$) embedded directly into the safety operator:
    \begin{align*}
      \dot{z}_R &= A_R z_R + B_R \alpha_{\text{Robust\_ECBF}}, \\
      u_{\text{STT-R}} &= \text{sat}\left(\max(-C_R z_R, u_{\text{nom}}), u_{\max}\right)
    \end{align*}
    where $A_R = \begin{bmatrix} 0 & 1 & 0 \\ 0 & 0 & 1 \\ -\omega_f^3 & -3\omega_f^2 & -3\omega_f \end{bmatrix}$, $B_R = \begin{bmatrix} 0 \\ 0 \\ \omega_f^3 \end{bmatrix}$, and $C_R = \begin{bmatrix} 1 & 0 & 0 \end{bmatrix}$.

    \item \textbf{STT - Bessel Topology:} Implements a 3rd-order Bessel filter kernel ($\omega_f = 14.0\text{ rad/s}$) designed for linear phase response, using matrices $A_{\text{BS}} = \begin{bmatrix} 0 & 1 & 0 \\ 0 & 0 & 1 \\ -15\omega_f^3 & -15\omega_f^2 & -6\omega_f \end{bmatrix}$ and $B_{\text{BS}} = \begin{bmatrix} 0 \\ 0 \\ 15\omega_f^3 \end{bmatrix}$.
\end{enumerate}

\begin{table*}[t!]
  \centering
  \caption{Monte Carlo Performance Comparison ($N=50$ Runs, $\sigma_x=0.1\text{ m}$, $\sigma_v=0.3\text{ m/s}$, $\epsilon_{\text{robust}}=0.05\text{ m}$)}
  \label{tab:mc_results}
  \begin{tabular}{l c c c}
    \hline
    \textbf{Controller Architecture} & \textbf{Safety Rate (\%)} & \textbf{Min. Margin $h_{\min}$ [m]} & \textbf{Control Chatter $\text{TV}(u)$} \\
    \hline
    Nominal HOCBF & $64.0\%$ & $0.01 \pm 0.02$ & $2186.1 \pm 42.6$ \\
    Nominal ECBF & $64.0\%$ & $0.01 \pm 0.02$ & $2196.7 \pm 42.8$ \\
    PBF (Alan et al.) & $100.0\%$ & $0.06 \pm 0.02$ & $2178.3 \pm 42.6$ \\
    Thm Part 1 (Ext. Filter) & $0.0\%$ & $-176.14 \pm 0.00$ & $0.0 \pm 0.0$ \\
    \textbf{STT (Real Poles)} & $\mathbf{100.0\%}$ & $\mathbf{0.06 \pm 0.02}$ & $\mathbf{20.9 \pm 1.6}$ \\
    STT (Bessel) & $92.0\%$ & $0.03 \pm 0.02$ & $46.7 \pm 2.9$ \\
    \hline
  \end{tabular}
  \label{tab:1}
\end{table*}

\begin{figure}[t!]
    \centering
    \includegraphics[width=\columnwidth]{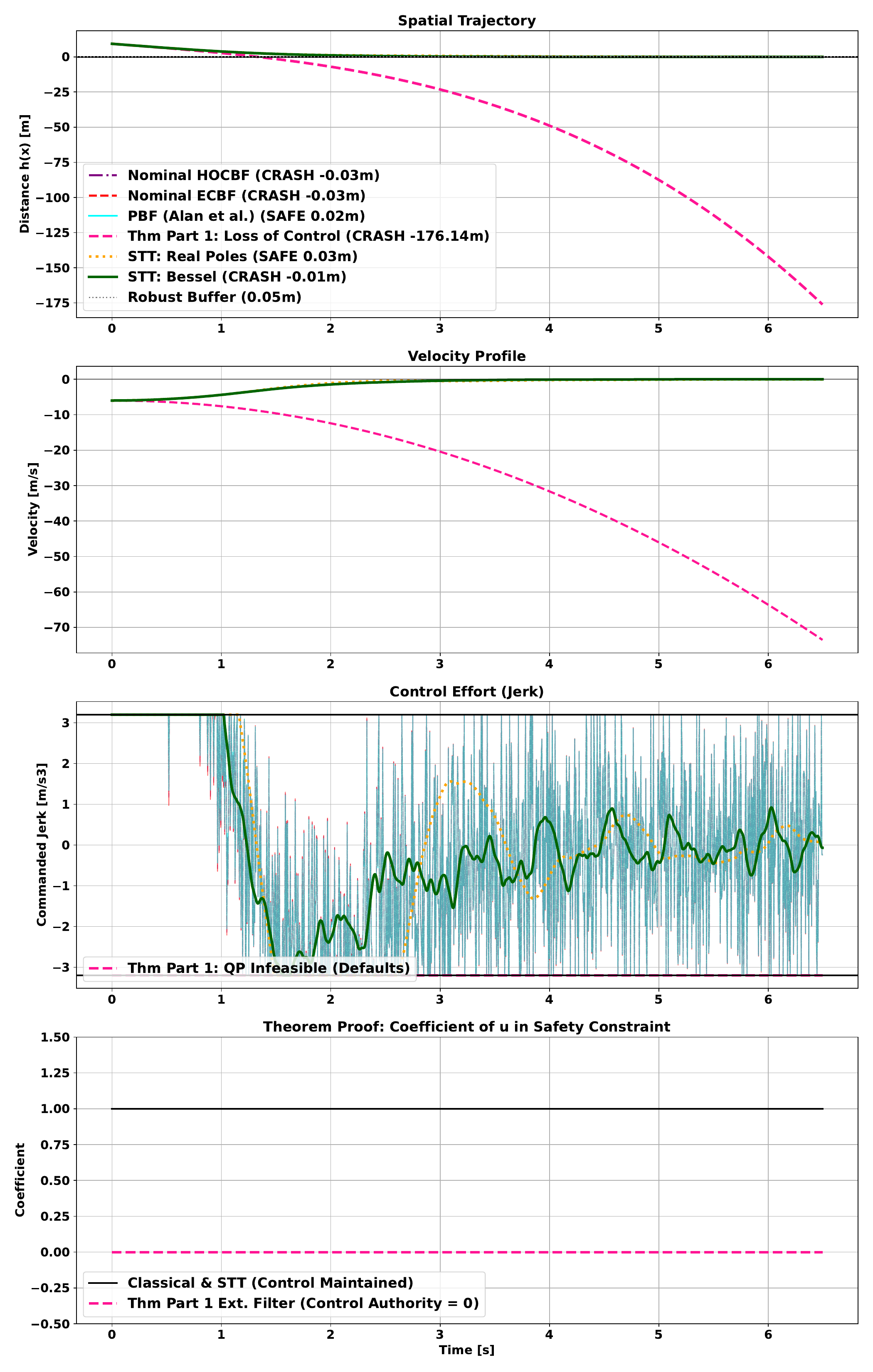}
    \caption{Single-run comparative trajectories for $n=3$ under measurement noise ($\sigma_x = 0.1\text{ m}, \sigma_v = 0.3\text{ m/s}$). \textbf{Row 1:} Spatial trajectories $h(x)$. \textbf{Row 2:} Velocity profiles. \textbf{Row 3:} Commanded jerk $u(t)$ demonstrating extreme chattering in standard and PBF CBFs versus smooth control in STT. \textbf{Row 4:} Affine control coefficient $\frac{\partial C(t)}{\partial u}$, confirming complete control authority loss ($\mathcal{L}_g C(t) = 0$) under external pre-filtering.}
    \label{fig:results}
\end{figure}

\subsection{Validation of Theorem Part 1: Complete Loss of Control Authority}
\label{sec:thm1-validation}

The experimental trajectories in Figure~\ref{fig:results} and the
statistical results in Table~\ref{tab:mc_results} provide direct
empirical confirmation of Theorem~\ref{thm:extfilter} Part 1. Passing
measurement signals through an external pre-filter before evaluating the
higher-order safety derivative reduces the affine control coefficient
$\mathcal{L}_g C(t) = \frac{\partial \tilde{h}^{(3)}}{\partial u}$ to zero throughout the entire
time horizon.

Because $u(t)$ is entirely removed from the active safety inequality constraint, the safety filter loses all control authority over the physical plant. The QP solver remains permanently infeasible and falls back to the nominal forward acceleration command ($u = u_{\text{nom}} = -3.2\text{ m/s}^3$). Consequently, as recorded in Table~\ref{tab:mc_results}, external pre-filtering achieves a \textbf{$0.0\%$ safety rate}, with zero control chattering ($\text{TV}(u) = 0.0 \pm 0.0$) and a catastrophic mean boundary penetration depth of \textbf{$-176.14\text{ m}$}.

\subsection{Validation of Theorem~\ref{thm:chatter}: Chatter
  Suppression and Robotic Hardware Implications}
\label{sec:classical-cbf-failure}

Standard memoryless formulations (Nominal HOCBF and ECBF) without robust
buffers suffer a low $64.0\%$ safety rate under raw feedback. Adding a
spatial buffer, as proposed by
PBF~\cite{alan2023parameterized,alan2025generalizing}, restores the
theoretical safety margin, successfully yielding a $100.0\%$ safety
rate. However, while PBF preserves spatial safety and active control
authority ($\mathcal{L}_g C(t) = 1.0$), it completely fails to attenuate temporal
noise.

High-frequency sensor noise injected into position and velocity is
repeatedly differentiated when evaluating $h^{(3)}(x)$, causing the
commanded jerk to chatter violently between the saturation bounds
$[-u_{\max}, +u_{\max}]$ (Row 3 of Table~\ref{tab:1}). Over the 50 Monte
Carlo runs, PBF exhibits an extremely high Total Variation metric of
$\text{TV}(u) = 2178.3 \pm 42.6$. In physical robotic platforms, such as
autonomous ground vehicles and quadrupedal robots, this high-frequency
control chattering poses severe operational hazards, including gearbox
fatigue, thermal dissipation, and structural resonance.

By contrast, the STT (Real Poles) architecture resolves this fundamental
trade-off. As proven in Theorems~\ref{thm:extfilter}
and~\ref{thm:stt_forward_invariance}, embedding spatial dynamics into a
spatio-temporal kernel filters high-frequency noise while maintaining
control authority ($\mathcal{L}_g C(t) = 1.0$). STT (Real Poles) perfectly matches
the $100.0\%$ safety rate of PBF while simultaneously reducing control
chatter by over $99\%$, dropping $\text{TV}(u)$ down to $20.9 \pm 1.6$.

\subsection{Filter Pole Topology, Phase Lag, and Transient Overshoot}
\label{sec:filter-topology}

Operating near the physical safety limit with a calibrated robust buffer ($\epsilon_{\text{robust}} = 0.05\text{ m}$) highlights the critical role of kernel pole selection in STT synthesis:
\begin{itemize}
\item \textbf{STT Real Poles ($100.0\%$ Safety Rate):} Critically damped
  real poles introduce minimal phase delay and zero transient overshoot
  in step response, allowing the controller to decelerate smoothly and
  consistently halt prior to boundary penetration
  ($h_{\min} = 0.06 \pm 0.02\text{ m}$).
\item \textbf{STT Bessel ($92.0\%$ Safety Rate):} Although optimized for
  linear phase delay, the higher initial group delay of the 3rd-order
  Bessel filter delays the onset of maximum braking effort. This dynamic phase lag leads to 
  unwanted transient overshoot, triggering physical boundary violations
  in $8\%$ of runs and increasing control chatter ($\text{TV}(u) = 46.7 \pm 2.9$).
\end{itemize}

These findings confirm that dynamic spatio-temporal kernel design via real-pole STT is essential to achieve noise suppression without introducing destructive phase lag or sacrificing physical safety.

\section{Related Work}
\label{sec:related_work}

The mathematical formalization of safety in control systems has been largely driven by the development of Control Barrier Functions (CBFs). The foundation of modern CBFs provides a geometric framework that unifies safety and performance through Control Lyapunov-Barrier Function Quadratic Programs (CLF-CBF QPs) \cite{ames2017control}. To handle systems where the control input does not appear in the first derivative of the safety constraint, researchers developed relative-degree barriers, prominently including High-Order CBFs (HOCBF) \cite{xiao2019control} and Exponential CBFs (ECBF) \cite{nguyen2016exponential}. These variants, alongside reciprocal barrier formulations \cite{ames2017control}, established the dominant paradigm of enforcing spatial boundaries via affine QP formulations.

As the field matured, the literature expanded to address various
practical implementation challenges. Significant efforts have been
directed toward Sampled-data CBFs to handle discrete-time
implementation~\cite{taylor2022safety}, Robust CBFs to handle bounded
model uncertainties~\cite{alan2023parameterized, alan2025generalizing},
Predictive CBFs to mitigate myopia~\cite{breeden2022predictive}, and
Delayed CBFs for systems with actuator or state
delays~\cite{singletary2020control}. However, our fundamental Theorem
differs from these approaches because it attacks the underlying
mathematical kernel structure of the barrier function itself, rather
than addressing external uncertainty, sample-and-hold effects, or
predictive horizons.

Dynamic Control Barrier Functions (D-CBFs) \cite{jian2023dynamic}
introduce auxiliary dynamic internal states to accommodate higher
relative degrees and input constraints. Although D-CBFs introduce
auxiliary dynamics, the safety constraint is still synthesized from
instantaneous Lie differentiation rather than a proper spatio-temporal
kernel. Similarly, Input-to-State Safe CBFs (ISS-CBFs)
\cite{kolathaya2018input} counteract measurement noise and external
perturbations by expanding conservative spatial safety buffers via
class-$\mathcal{K}$ disturbance margins, yet they preserve the underlying
improper, memoryless derivative operator.In contrast, the proposed STT
unifies spatial boundary evaluation with temporal dynamic filtering by
synthesizing a proper spatio-temporal kernel
$G_{\text{STT}}(s) = F(s) G_{\text{improper}}(s)$. Rather than relying
on conservative set contractions or uncompensated dynamic extensions,
STT directly bounds the demanded control effort by anchoring spatial
decay to the temporal filter bandwidth. This architecture suppresses
high-frequency chatter while natively guaranteeing exact physical
forward invariance and preserving full affine control authority.

While filtering techniques have been introduced in the CBF literature,
they generally treat the filter as an external augment to a
fundamentally memoryless barrier mechanism. For instance, approaches
targeting smooth control often apply low-pass filtering externally to
the synthesized downstream control signal, effectively smoothing the
output of the CBF QP~\cite{liu2025ensuring}. Similarly, stochastic and
observer-based CBF formulations rely on upstream filtering, such as
Extended Kalman Filters (EKFs), to smooth noisy measurements into state
estimates prior to evaluating a traditional
CBF~\cite{clark2021automatica}. However, as established in
Theorem~\ref{thm:extfilter}, this methodology is suspect for physical
control systems. Recent measurement-robust frameworks attempt to
mitigate state estimation error via learning-based
adjustments~\cite{rober2026learning}, yet they still fundamentally rely
on passing corrected measurements to a zero-memory spatial operator. In
stark contrast, STT integrates the temporal filter directly into the
structural formulation of the safety constraint. By synthesizing a
proper spatio-temporal kernel, STT resolves the improper high-frequency
amplification at its root, natively preserving affine control authority
without falling into the surrogate boundary trap of external observers.

Since the proposed STT maps spatial geometry into temporal filter states
via continuous-time convolution, our dynamic extension naturally
connects to the broader literature on Volterra integral
operators~\cite{brunner1945volterra}, distributed-delay systems, and
hereditary control systems~\cite{delfour1972controllability} involving
memory operators and functional differential
equations~\cite{hino2006functional}. However, we do not require a
lengthy analysis of infinite-dimensional state histories; we simply
acknowledge these mathematical connections while emphasizing that our
STT contribution is fundamentally a synthesized safety operator, not
merely another generic hereditary system.

Recent investigations have increasingly exposed the structural
vulnerabilities of naive CBF safety filters. While Choi et
al.~\cite{choi2026safety} demonstrate that CBF safety filters can induce
internal state divergence due to non-minimum phase zero dynamics under
non-negative barrier inputs, this paper addresses the dual external
failure mode: control demand divergence. We show that even when internal
dynamics are stable, the improper transfer function of classical spatial
differentiation forces unbounded high-frequency control effort, making
continuous QP feasibility impossible under real actuator limits.

\section{Conclusion and Future Work}
\label{sec:concl-future-work}

By exposing the HOCBF and ECBF frameworks as improper zero-memory
filters, we diagnose the root cause of high relative-degree QP
infeasibility. The Space-Time Transform cures this by enforcing safety
through proper temporal convolution, guaranteeing rigorous mathematical
safety bounds that respect the physical realities of real hardware.

Future work will focus on the exhaustive functional analysis of the
Space-Time Transform operator space. While the foundational algebraic
properties, such as linearity, causal representation, and invertibility,
have been established here to enable physical hardware synthesis, a
rigorous functional analysis is required to fully characterize the
theoretical boundaries of spatio-temporal safety filters across broader
classes of control systems.

\bibliographystyle{IEEEtran}
\bibliography{ref.bib}

\end{document}